\crefname{subsection}{Section}{Sections}
\crefname{section}{Section}{Sections}
\crefname{algocf}{Algorithm}{Algorithms}
\crefname{assumption}{Assumption}{Assumptions}
\pgfplotsset{compat=newest}
\pgfplotsset{yticklabel style={text width=2.5em,align=right}}
\pgfplotsset{xticklabel style={text height=0.7em}}
\pgfplotsset{/pgfplots/error bars/error bar style={thick, solid}}
\pgfplotsset{/pgfplots/error bars/error mark options={line
    width=0.9pt,mark size=3pt,rotate=90}}
\pgfplotsset{
  every axis plot/.append style={thick, black},
  every axis plot/.append style={
    every mark/.append style={mark size=3,solid,fill opacity=0}
  }
}
\pgfplotsset{
  fixed error bar/.style n args={3}{
    error bars/y dir=both,
    error bars/y explicit,
    table/x=#1,
    table/y=#2,
    table/y error plus=#3,
    table/y error minus expr={
      ifthenelse(
      \thisrow{#2} - \thisrow{#3} <= 0,
      \thisrow{#2} - 0.5 * \pgfkeysvalueof{/pgfplots/ymin},
      \thisrow{#3})
    },
  },
}
\newlength\figureheight
\newlength\figurewidth
\def\hd{\widehat{d}}
\def\hsig{\widehat{\sigma}}
\def\hdelta{\widehat{\delta}}
\def\sig{{\sigma}}
\def\inner{\widehat{\textnormal{E}}}
\def\MCest{\widehat{H}}    %
\def\varf{V^{\textrm{f}}} %
\def\Ef{E^{\textrm{f}}}   %
\def\eps{\varepsilon}     %
\def\hL{{\widehat{L}}}       %
\def\heta{{\widehat\eta}}     %
\def\fN{\mathcal{N}}       %
\def\hfN{\widehat{\mathcal{N}}}      %
\def\logtol2{\ensuremath{\abs{\log\tol}^2}}
\def\ndelta{\nu}
\newcommand{\fracs}[2]{{\textstyle \frac{#1}{#2}}}
\providecommand{\heavisidesymb}{{\ensuremath{ \textnormal{H}}}}
\providecommand{\heaviside}[1]{{\ensuremath{ \heavisidesymb\p*{#1}}}}
\providecommand{\ind}[1]{{\ensuremath{ \boldsymbol{1}_{#1}}}}
\providecommand{\fpow}[2]{{{#1}/{#2}}}
\renewcommand{\th}{{\ensuremath{{\textnormal{th}}}}}
\providecommand{\tol}{\ensuremath{\varepsilon}}
\providecommand{\revise}[1]{#1}
\providecommand\figlabel{}
\newcommand\input{imgs/[}1]{\input{imgs/#1}}
\newcommand\subfig[1]{
  \begin{subfigure}{0.485\textwidth}\input{imgs/#1}\caption{}\end{subfigure}}
\newcommand{\maxNell}{\max\p{N_\ell, N_{\ell-1}}}
\title{Multilevel nested simulation for efficient risk estimation
\thanks{Submitted 28 February 2018.}}
\author{Michael B. Giles \thanks{University of Oxford
    (\email{mike.giles@maths.ox.ac.uk}, \email{hajiali@maths.ox.ac.uk}).} \and
  Abdul-Lateef Haji-Ali\footnotemark[2]}
\begin{document}
\maketitle

\begin{abstract}
  We investigate the problem of computing a nested expectation of the form \\
  ${\prob{{\E{X \given Y}} \geq 0} = \E{\heaviside{\E{X\given Y}}}}$ where
  $\heavisidesymb$ is the Heaviside function. This nested expectation appears,
  for example, when estimating the probability of a large loss from a financial
  portfolio.
  We present a method that combines the idea of using Multilevel Monte
  Carlo (MLMC) for nested expectations %
  with the idea of adaptively selecting the number of samples in the
  approximation of the inner expectation, as proposed
  by~%
  (Broadie et al., 2011). We propose and analyse an algorithm that
  adaptively selects the number of inner samples on each MLMC level
  and prove that the resulting MLMC method with adaptive sampling has
  an $\Order{\tol^{-2}\logtol2}$ complexity to achieve a root
  mean-squared error $\tol$.
    The theoretical analysis is verified by numerical experiments on a
    simple model problem. We also present a stochastic
      root-finding algorithm that, combined with our adaptive methods,
      can be used to compute other risk measures such as Value-at-Risk
      (VaR) and Conditional Value-at-Risk (CVaR), with the latter
      being achieved with $\Order{\tol^{-2}}$ complexity.
\end{abstract}

\begin{keywords}
  Multilevel Monte Carlo,
  Nested simulation,
  Risk estimation
\end{keywords}

\begin{AMS}
  65C05, 62P05
\end{AMS}

\section{Introduction}\label{sec:intro}

Our focus in the current work is on computing the following quantity
of interest
\begin{equation}
    \label{eq:objective-nested}
  \eta = \E{\heaviside{\E{X \given Y}}},
\end{equation}
where $\textnormal{H}$ is the Heaviside step function (i.e.,
$\heaviside{x} = 1$ for $x \geq 0$ and $0$ otherwise) and the
inner expectation of the \revise{one-dimensional} random variable,
$X$, is conditional on the value of the outer
\revise{multi-dimensional} random variable, $Y$. This problem appears
in many settings. However, our main motivation for looking at this
problem is to compute the probability of a large loss from a financial
portfolio, see for example~\cite{gordy:nested, broadie:adapt}. In such
a context, $X$ would be \revise{a one-dimensional random variable
  equal to the} sum of losses at maturity from the options in the
portfolio in excess of some threshold value, while $Y$ would be a
\revise{multi-dimensional random variable that includes the} values of
the underlying stocks and other risk factors at some short risk
horizon.

Our approach to approximating $\eta$ in~\eqref{eq:objective-nested}
draws from the work of Gordy \& Juneja~\cite{gordy:nested}. There, the
authors expressed the probability of a large loss as a nested
expectation
then proposed using nested Monte Carlo samplers to estimate both the
outer and inner expectations. They proved that using
$\Order{\tol^{-1}}$ samples in the inner Monte Carlo sampler and
$\Order{\tol^{-2}}$ samples in the outer Monte Carlo sampler is
sufficient, under certain conditions, to achieve a root mean squared
(RME) error of $\Order{\tol}$ in the estimation of $\eta$. Thus the
total cost of their method is
$\Order{\tol^{-3}}$. See~\cite{giorgi:ml2r} for sharper and extended
analysis of their results. In~\cite{broadie:adapt}, Broadie et.\ al.\
improved the complexity of the nested Monte Carlo method by adapting
the number of samples in the inner Monte Carlo sampler to the specific
sample of the outer random variable, $Y$. The basic idea relies on the
fact that the step function in~\eqref{eq:objective-nested} does not
change value when a large error is committed in the estimation of the
inner conditional expectation, $\E{X \given Y}$, provided it is
sufficiently far from $0$. Hence, depending on $\abs{\E{X \given Y}}$
and the size of the statistical error that is committed when
approximating ${\E{X \given Y}}$ with a Monte Carlo sampler, the
required number of samples is determined with a cap which is again
$\Order{\tol^{-1}}$.  Using this idea, it was shown
in~\cite{broadie:adapt} that, under certain conditions, the expected
required number of samples in the inner Monte Carlo estimator is
$\Order{\tol^{-1/2}}$, bringing the total cost of the nested Monte
Carlo method down to $\Order{\tol^{-5/2}}$.  We will review in more
details the nested Monte Carlo method and the adaptive method to
select the number of inner samples in \cref{sec:nested-mc}.

In parallel, the Multilevel Monte Carlo (MLMC) method was introduced
in~\cite{giles:MLMC} to reduce the complexity of Monte Carlo methods
when only approximate samples of a random variable can be generated
and, hence, Monte Carlo approximations are necessarily biased. Assume
we wish to estimate $\E{g}$ for some random variable, $g$. Denoting
the $\ell^\th$ level approximation of $\E{g}$ by $\E{g_\ell}$ such
that
$\abs*{\E{g_\ell} - \E{g}} =\Order{2^{-\alpha\ell}}
\xrightarrow{\ell \to \infty} 0$ for some $\alpha > 0$, the first
step to approximate $\E{g}$, up to a RMS error $\tol$, is to select
the approximation level, $L$, such that
$\abs*{\E{g} - \E{g_L}} = \Order{\tol}$. A standard Monte Carlo
method then approximates $\E{g_L}$ with a sufficient number of samples
such that the standard deviation of the approximation is also
$\Order{\tol}$. If we assume that the cost of computing a single
sample of $g_\ell$ is $\Order{2^{\gamma \ell}}$, the complexity of
this Monte Carlo method is $\Order{\tol^{-2 - {\gamma}/{\alpha}}}$. On
the other hand, MLMC is based on the telescopic sum
\[ \E{g_L} = \E{g_0} + \sum_{\ell=1}^{L} \E{g_\ell - g_{\ell-1}}.\]
Then, we estimate $\E{g_0}$ and $\E{g_\ell - g_{\ell-1}}$, for
$\ell \leq L$, with independent Monte Carlo samplers with a number of
samples that is decreasing as $\ell \to 0$. Provided that $g_\ell$ and
$g_{\ell-1}$ are sufficiently correlated such that
$\var{g_\ell - g_{\ell-1}} = \Order{2^{-\beta \ell}}$ for
$\beta>0$, the complexity of MLMC is
$\Order{\tol^{-2-\max\p*{0, \p{\gamma-\beta}/{\alpha}}}}$ for
$\gamma \neq \beta$ and $\Order{\tol^{-2}\logtol2}$ for
$\gamma = \beta$~\cite{giles:acta}.
Bujok et.\ al.~\cite{bujok:bernoulli} applied MLMC to nested
  expectations of the form \linebreak ${\E{\phi(\E{X \given Y})}}$ for
  a piece-wise linear function, $\phi$, and $X$ being a Bernoulli
  random variable. Giles~\cite{giles:acta} considers the case in which
  $X$ is a more general random variable with $\phi$ being a twice
  differentiable function. In any case, approximate samples of the
  random variable ${\E{X \given Y}}$ are generated with a Monte Carlo
  sampler with $N_\ell  =  2^\ell$ samples and then a hierarchy of
  approximations based on the number of inner samples is used in an
  MLMC setting. The authors of both papers proved that MLMC in this
  case has an $\Order{\tol^{-2}\logtol2}$ complexity. Moreover, they
  showed that using an antithetic estimator improves the complexity of
  MLMC to $\Order{\tol^{-2}}$. On the other hand, if $\phi$ is a step
  function, as is the case when computing the probability of large
  loss, then we will prove in \cref{s:mlmc-nested} and show
  numerically in~\cref{s:model-example} that MLMC with
  deterministic sampling has $\Order{\tol^{-5/2}}$
  complexity. Moreover, in this case, using antithetic sampling does
  not improve the complexity.

The main contribution of this work is to further reduce the complexity
of the MLMC method for step functions to $\Order{\tol^{-2}\logtol2}$,
by using ideas from~\cite{broadie:adapt} to adapt the number of inner
samples, on each level of the MLMC method, based on the particular
realisation of the random variable, $Y$, of the outer Monte Carlo
sampler.
We start in \cref{sec:nested-est} where we recall in more
detail the work and results of~\cite{gordy:nested, broadie:adapt} on
nested Monte Carlo estimators (see
  also~\cite{hong:var-review} for a more thorough review)
and~\cite{giles:acta} on using MLMC to estimate nested
expectations. Then, in \cref{s:adaptive} we start by motivating
an algorithm, inspired by~\cite{broadie:adapt}, that selects the
number of inner samples given a realisation of the outer random
variable, $Y$. \cref{s:adaptive} also includes an analysis of
the adaptive algorithm to show that a MLMC estimator that uses this
adaptive strategy has levels whose variance of differences converges
with rate $\Order{2^{-\ell}}$ while the work increases with rate
$\Order{2^{\ell}}$ as $\ell$ increases, hence the complexity of MLMC
is $\Order{\tol^{-2}\logtol2}$ in this case.

In \cref{s:model-example}, a simple model problem that mimics
the problem of computing the probability from a large loss from a
financial portfolio is presented. We apply the MLMC method with
adaptive and deterministic sampling and show that our analysis matches
well with numerical experiments. In \cref{s:beyond-prob} we
discuss how our methods can be combined with a simple root-finding
algorithm to compute other risk measures, namely, the Value-at-Risk
(VaR) and the Conditional Value-at-Risk (CVaR). Slightly surprisingly,
we prove in the case of CVaR that the complexity of MLMC with
deterministic sampling is the optimal $\Order{\tol^{-2}}$.  Finally,
conclusions and future work directions are discussed in
\cref{sec:conc}.
\section{Nested estimators for nested
  expectations}\label{sec:nested-est}
\subsection{Nested MC estimators}\label{sec:nested-mc}
In this section we review the works of Gordy \&
Juneja~\cite{gordy:nested} and Broadie et.\ al~\cite{broadie:adapt}
which are based on approximating the inner and outer expectations
in~\eqref{eq:objective-nested} using Monte Carlo.
In~\cite{gordy:nested}, the conditional \emph{inner} expectation
$\E{X \given Y = y}$, for a given $y$, is estimated using an
unbiased Monte Carlo estimator with $N$ samples as follows:
  \begin{equation}\label{eq:inner-estimator-mc}
  \inner_N(y) = \frac{1}{N}\sum_{n=1}^N  x^{(n)}(y),
\end{equation}
where $x^{(n)}(y)$ is the $n$'th sample of the random variable $X$
given $Y = y$ and $\en\{\}{x^{(n)}}_{n}$ are mutually independent,
conditional on $Y = y$.  Then, Monte Carlo is used again to
approximate the \emph{outer} expectation as follows:
  \begin{equation}\label{eq:outer-estimator}
  \eta \approx \frac{1}{M} \sum_{m=1}^M \heaviside{\inner_N(y^{(m)})},
\end{equation}
where $y^{(m)}$ is the $m$'th sample of the random variable $Y$ and
$\en\{\}{y^{(m)}}_{m}$ are mutually independent.
  Denote the joint density of the two random variables $\E{X \given Y}$ and
  $\inner_N(Y)$ by $p_N(y, z)$. Moreover, for $i = 0,1$ and $2$, assume
  that $\frac{\partial^i }{\partial y^i} p_N(y,z)$ exists and that
  there exist a non-negative function, $p_{i,N}$, such that
  \[%
    \begin{aligned}
      \abs*{\frac{\partial^i }{\partial y^i} p_N(y,z)} &\leq p_{i,
        N}(z) &\text{for all $N, y$ and $z$},\\
      \text{and } \quad \sup_N \int_{-\infty}^\infty \abs{z}^q
      p_{i,N}(z)\D z &< \infty, &
    \end{aligned}
  \]
  \revise{for all $0 \leq q \leq 4$}, then, by~\cite[Proposition
  1]{gordy:nested}, the RMS error of the
  estimator~\eqref{eq:outer-estimator} is
  $\Order{M^{-1/2} + N^{-1}}$. Hence, to get an RMS error of
  $\Order{\tol}$, we require $M = \Order{\tol^{-2}}$ and
  ${N = \Order{\tol^{-1}}}$, and the total complexity is
  $\Order{\tol^{-3}}$.

  Note that, given a sample of the random variable $Y$, we want to
  evaluate a step function that is based on the expectation
  $\E{X \given Y}$ which is approximated using a Monte Carlo estimator
  with $N$ samples. However, depending on how far the expectation
  $\E{X \given Y}$ is from zero, where the step function changes
  value, a very rough approximation of ${\E{X \given Y}}$ might still
  give the correct value for the step function $\heaviside{\E{X \given
      Y}}$. This motivates adapting the number of samples based on the
  value of $\abs{\E{X \given Y}}$. Such a method was introduced and
  analysed in~\cite{broadie:adapt} for a nested Monte Carlo
  approximation.

  Heuristically, assuming that given $Y$ we have an estimate
  $\inner_N(Y) > 0$, then the authors in~\cite{broadie:adapt} ask:
  what is the probability that adding an extra sample will produce a
  negative estimate?  Denoting the random variables
  $\mu \eqdef {\E{X \given Y}}$ and $\sigma^2\eqdef\var{X \given Y}$
  and using Chebyshev's inequality yields
\[
  \begin{aligned}
    \prob{\inner_{N+1}(Y) \leq 0\given \inner_{N}(Y)} &= \prob*{ N
      \inner_N(Y) + \mu \leq \mu - x^{(N+1)}(Y)
      \given \inner_{N}(Y)}\\
    &\leq \prob*{ N\inner_N(Y) + \mu \leq \abs*{\mu - x^{(N+1)}(Y)} \given \inner_{N}(Y)} \\
    &\leq \frac{\sigma^2}{\p{N \inner_N(Y) + \mu }^2} \approx
    \frac{\sigma^2}{N^2 \mu^2}.
  \end{aligned}
\]
Then, for $d \eqdef \abs{\mu}$, we only need
$N \geq \tol^{-{1}/{2}} {\sigma}/{{d}}$ in order to ensure that adding
a new inner sample does not change the value of the step function with
probability \revise{$1 - \tol$}, i.e., that our estimate
$\heaviside{\inner_{N}(Y)} = \heaviside{\inner_{N+1}(Y)} \approx
\heaviside{\E{X \given Y}}$ is correct with probability
\revise{$1 - \tol$}. In~\cite{broadie:adapt}, two algorithms were
introduced to adaptively determine the number of samples in the inner
Monte Carlo sampler of a nested Monte Carlo method. The first is based
on minimising the total required number of samples for all inner Monte
Carlo samplers subject to some error tolerance. The second algorithm
is iterative, such that in every iteration estimates of $d$ and $\sig$
given $Y$ are computed using $N$ inner samples, then more inner
samples are added until ${N d}/{\sig}$ exceeds some error margin
threshold. In the current work, we instead start by adding a cap on
the number of samples and set
\begin{equation}\label{eq:mc-adaptive-samples}
  N = \en*\lceil\rceil{\min\p*{\Order{\tol^{-1}},
      \tol^{-{1}/{2}}\frac{\sigma}{{d}}}}.
\end{equation}
Hence when
$\delta \eqdef {{d}}/{\sigma} = \order{\tol^{{1}/{2}}}$, we
would use the maximum number of samples, otherwise, fewer samples are
used and we still get the same estimate for $\heaviside{\inner_N(Y)}$ with
probability \revise{$1 - \tol$}. Assuming that the random variable $\delta$ has a
distribution with a bounded density near $0$, the average number of
samples is then $\Order{\tol^{-{1}/{2}}}$ and the complexity of the
nested Monte Carlo method with this choice of number of samples is
$\Order{\tol^{-{5}/{2}}}$.

\subsection{MLMC for nested expectation}\label{s:mlmc-nested}

In this section, based on ideas in~\cite{bujok:bernoulli, giles:acta,
  hajiali:msthesis}, we will use a MLMC estimator to approximate the
\emph{outer} expectation, using the number of samples in the inner
Monte Carlo as a discretization parameter, as follows
  \begin{equation}\label{eq:mlmc-outer-est}
  \widehat{\eta} \eqdef \sum_{\ell=0}^L \frac{1}{M_\ell}
  \sum_{m=1}^{M_\ell} \heaviside{\inner^{(\mathrm{f}, \ell,
      m)}_{N_\ell}( y^{(\ell,m)})} -
  \heaviside{\inner^{(\mathrm{c},\ell, m)}_{N_{\ell-1}}(
    y^{(\ell,m)})},
\end{equation}
where the estimator of the inner expectation, $\inner_{N_\ell}$, is as
defined in~\eqref{eq:inner-estimator-mc} where, as before,
\begin{equation}\label{eq:mlmc-inner-level}
  \inner^{\p{f,\ell, m}}_{N_\ell}(y) \eqdef \frac{1}{N_\ell}
  \sum_{n=1}^{N_\ell} x^{(f,\ell, m, n)}(y)
\end{equation}
with $\inner^{\p{c,\ell, m}}_{N_{\ell-1}}$ defined similarly but with
$\heaviside{\inner_{N_{-1}}^{\p{\textrm c, 0, \cdots}}(\cdot)} \eqdef
0$. Here, ${\{x^{(\cdot,\ell, m, n)}(y)\}}_{n}$ are i.i.d.\
samples of $X$ given $Y = y$.  Moreover, the samples of the random
variable, $X$, used in $\inner^{\p{\mathrm{f},\ell, m}}_{N_\ell}$ and
$\inner^{\p{\mathrm{c},\ell, m}}_{N_{\ell-1}}$ are mutually
independent, but conditional on the same $y$, and independent from the
samples used in $\inner^{\p{\mathrm{f},\ell', m'}}_{N_\ell}$ and
$ \inner^{\p{\mathrm{c},\ell', m'}}_{N_{\ell-1}}$ for any
$ \ell' \neq \ell$ or $ m' \neq m$.

Again, due to~\cite[Proposition 1]{gordy:nested} and under the
assumptions listed therein, and recalled in
\cref{sec:nested-mc}, we have
  \begin{equation}\label{eq:weak-convergence}
  \abs*{\E*{\heaviside{\inner_{N_{\ell}}( Y)} - \heaviside{\E{X \given
          Y}}}} = \Order{N_{\ell}^{-1}}.
\end{equation}
Moreover, the cost to generate samples of
$\heaviside{\inner_{N_\ell}(\cdot)}$ is $\Order{N_\ell}$. Unlike
standard Monte Carlo, MLMC also requires strong convergence of the
estimators. More specifically, the variance of the difference
$ \heaviside{\inner_{N_\ell}(y)} - \heaviside{\inner_{N_{\ell-1}}(y)}
$ must converge sufficiently fast. At this point, we will prove
another general result regarding nested Monte Carlo samplers. First,
we list the first of two main assumptions in the current work. In what
follows, we define the \revise{one-dimensional} random variables
$d \eqdef \abs*{\E{X \given Y} }$, $\sig^2 \eqdef {\var{X \given Y} }$
and $\delta \eqdef {d}/{\sigma}$.

\begin{assumption}\label{ass:bounded-density}
  We assume that the probability density function of the non-negative,
  \revise{one-dimensional} random variable, $\delta$, denoted by
  $\rho$, exists. Moreover, we assume that there exist constants
  $\rho_{0} > 0$ and $\delta_0 > 0$ such that
  $\rho(\delta) \leq \rho_{0}$ for all
  $\delta \in \en*[]{0, \delta_0}$.
\end{assumption}

Based on this assumption and for a positive, non-increasing
  function $a$, we have
  \begin{equation}\label{eq:int-delta-identity}
  \int_0^\infty {a(\delta)} \rho(\delta) \D \delta \leq \rho_0 \int_{0}^{\infty} a(\delta) \D \delta +
  a(\delta_0),
\end{equation}
which can be shown by splitting the integral at $\delta_0$ and then
bounding $\rho$ using \cref{ass:bounded-density}. We will
also make repeated use of the following identity for a $q > 1$ and a
constant $b>0$
  \begin{equation}\label{eq:int-min-identity}
  \int_{0}^\infty \min\p*{1, b x^{-q}} \D x = \frac{q\, b^{1/q}}{q-1},
\end{equation}
which can be shown by splitting the integral at $b^{1/q}$.

\begin{proposition}[Variance of the inner MC]\label{thm:inner-mc-var}
  Let $X, Y$ be two random variables satisfying
  \cref{ass:bounded-density}. %
  Then
    \begin{equation}\label{eq:nested-mc-var}
    \aligned \var{\heaviside{\inner_N(Y)} - \heaviside{\E{X \given
          Y}}} &\leq \E*{\p*{\heaviside{\inner_{N}(Y)} \!-\!
        \heaviside{\E{X \given
            Y}}}^2} \\
    &= \Order{N^{-1/2}}. \endaligned
  \end{equation}
\end{proposition}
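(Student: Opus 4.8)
The plan is to exploit that the integrand is essentially an indicator. Since $\heavisidesymb$ takes values in $\{0,1\}$, the difference $\heaviside{\inner_N(Y)} - \heaviside{\E{X\given Y}}$ lies in $\{-1,0,1\}$, so its square equals the indicator of the event that the two Heaviside evaluations disagree. The first inequality in~\eqref{eq:nested-mc-var} is then trivial, since $\var{Z} \leq \E{Z^2}$ for any random variable $Z$, and it suffices to estimate the probability that $\inner_N(Y)$ and $\mu \eqdef \E{X\given Y}$ land on opposite sides of the origin.

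First I would condition on $Y$. Given $Y$, the values $\mu$ and $\sig^2 \eqdef \var{X\given Y}$ are fixed and $\inner_N(Y)$ is an average of $N$ i.i.d.\ samples with conditional mean $\mu$ and conditional variance $\sig^2/N$. A short case split on the sign of $\mu$ shows that a disagreement of the two Heaviside values forces $\abs{\inner_N(Y) - \mu} \geq d$: when $\mu \geq 0$ a disagreement means $\inner_N(Y) < 0$, so $\abs{\inner_N(Y)-\mu} = \mu - \inner_N(Y) > \mu = d$; when $\mu < 0$ it means $\inner_N(Y) \geq 0$, so $\abs{\inner_N(Y)-\mu} = \inner_N(Y) - \mu \geq -\mu = d$. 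Applying Chebyshev conditionally, and truncating at $1$ since every probability is at most $1$, I would obtain
\[
  \prob*{\heaviside{\inner_N(Y)} \neq \heaviside{\mu} \given Y}
  \leq \prob*{\abs{\inner_N(Y) - \mu} \geq d \given Y}
  \leq \min\p*{1, \frac{\sig^2/N}{d^2}} = \min\p*{1, \frac{1}{N\delta^2}},
\]
using $\delta = d/\sig$.

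Then I would take the outer expectation over $Y$. By the definition of the density $\rho$ of $\delta$ this becomes $\int_0^\infty \min\p{1,(N\delta^2)^{-1}}\,\rho(\delta)\D\delta$. Because $\delta \mapsto \min\p{1,(N\delta^2)^{-1}}$ is positive and non-increasing, \cref{ass:bounded-density} together with~\eqref{eq:int-delta-identity} bounds this by $\rho_0\int_0^\infty \min\p{1,(N\delta^2)^{-1}}\D\delta + \min\p{1,(N\delta_0^2)^{-1}}$. The identity~\eqref{eq:int-min-identity} with $q=2$ and $b=1/N$ evaluates the integral to $2N^{-1/2}$, while the boundary term is $\Order{N^{-1}}$, which yields the claimed $\Order{N^{-1/2}}$ bound.

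The main obstacle, and the only genuinely delicate point, is the behaviour near $\delta = 0$, where the Chebyshev estimate $1/(N\delta^2)$ blows up. This is exactly where the truncation $\min\p{1,\cdot}$ and the boundedness of $\rho$ on $\en*[]{0,\delta_0}$ from \cref{ass:bounded-density} are indispensable: together they cap the contribution of the small-$\delta$ region, which would otherwise render the integral divergent, and it is the interplay of the quadratic singularity with this cap that produces the half-order rate $N^{-1/2}$ rather than the $N^{-1}$ one might naively expect.
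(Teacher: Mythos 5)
Your proof is correct and follows essentially the same route as the paper's: reduce the squared difference to the probability that $\inner_N(Y)$ and $\E{X\given Y}$ disagree in sign, bound that conditionally via Chebyshev truncated at $1$ to get $\min\p*{1,\delta^{-2}N^{-1}}$, and then integrate over $\delta$ using~\eqref{eq:int-delta-identity} and~\eqref{eq:int-min-identity}. The only difference is that you spell out the sign case-split and the trivial variance bound $\var{Z}\leq\E{Z^2}$, which the paper leaves implicit.
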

\begin{proof}
  We start from
\[
  \begin{aligned}
    &\E*{\p*{\heaviside{\inner_{N}(Y) } \!-\! \heaviside{\E{X \given
            Y}}}^2 \given Y} \\
    &\hskip 1cm =\prob[\Big]{\abs*{\heaviside{\inner_{N}(Y) } \!-\! \heaviside{\E{X
            \given Y} }}
      = 1\given Y} \ \\
    &\hskip 1cm \leq \ { \prob[\Big]{\, \abs*{{\inner_{N}(Y)} \!-\!  {\E{X \given
              Y} }} \geq d \ \given Y}},
  \end{aligned}
\]
where, by Chebyshev's inequality and bounding the probability by 1, we have
  \begin{equation}\label{eq:chebyshev-prob-bound}
  \aligned {\prob[\Big]{\, \abs{{\inner_{N}(Y)} \!-\!  {\E{X \given Y}
        }} \geq d \given Y}} &\leq \ \min\p*{1,
    {d^{-2}\var*{\inner_N(Y)
        \given Y}}} \\
  &= \min\p*{1, \delta^{-2} N^{-1}}. \endaligned
\end{equation}
Taking expectation over $Y$ yields
\[
  \begin{aligned}
    \E*{\p*{\heaviside{\inner_{N}(Y)} \!-\! \heaviside{\E{X \given
            Y}}}^2}
    &\leq \int_{0}^\infty \min\p*{1,
      {\delta^{-2} \, N^{-1}
      }} \rho(\delta) \D \delta \\
    &\hskip-2cm \leq \rho_0 \int_{0}^\infty \min\p*{1, {\delta^{-2} \, N^{-1} }} \D
    \delta +
    \min\p*{1, {\delta_0^{-2} \, N^{-1} }}\\
    &\hskip-2cm \leq 2\rho_{0} N^{-1/2} + \delta_0^{-2} \, N^{-1}.
  \end{aligned}
\]
Here we first used~\eqref{eq:int-delta-identity} and
then~\eqref{eq:int-min-identity}.  Hence, the variance is
$\Order{N^{-1/2}}$.
\end{proof}

Based on this proposition and under
\cref{ass:bounded-density}, for a \emph{deterministic}
choice of $N_\ell$, increasing with level, we have
  \begin{equation}\label{eq:var-converence}\aligned
  {\var*{\heaviside{\inner_{N_\ell}(Y)} - \heaviside{\inner_{N_{\ell-1}}(Y)}}} =
  \Order{N_{\ell-1}^{-1/2}}, \endaligned
\end{equation}
since for any two random variables $V$ and $W$, we have $\var{V + W}
\leq 2\var{V} +
2\var{W}$.  Hence, if we make a simple choice of $N_\ell=N_0
2^\ell$ and provided the assumptions of~\cite[Proposition
1]{gordy:nested} are satisfied so that we
have~\eqref{eq:weak-convergence}, we can conclude, according to
standard MLMC complexity analysis (with
$\alpha = 2\beta =\gamma =1$), that the cost to approximate
$\eta$ to an error tolerance of $\tol$ is $\Order{\tol^{-{5}/{2}}}$.

\begin{remark}[Using antithetic sampling]\label{rem:antithetic}
  In some nested simulation applications, it is possible to use an
  antithetic estimator, where all samples in the fine estimator,
  $\inner^{{\p{\textrm f, \cdots}}}$, are used in the coarse
  estimator, $\inner^{{\p{\textrm c, \cdots}}}$, in each level to
  increase the correlation between the fine and coarse estimators of
  the inner conditional expectation; see, for
  example,~\cite{giles:acta} or \cref{s:model-example} for
  more details.  In fact, such an estimator was used to compute the
  moments of the total loss from a portfolio at a risk horizon,
  in~\cite{gou:ms-var}. However, using antithetic sampling does not
  change the variance convergence rate in our setting because the
  discontinuity in the step function violates the differentiability
  requirements of the antithetic estimator. In fact, the main
  correlation between the fine and coarse samples is due to the strong
  convergence of the inner Monte Carlo
  estimator~\eqref{eq:nested-mc-var} and using independent samples is
  sufficient to get the rate
  in~\eqref{eq:var-converence}. Nevertheless, using an antithetic
  estimator might reduce the variance by a constant, as we will
  discuss in \cref{s:model-example}. We do not emphasise this
  in the presented theorems for clarity of presentation.
\end{remark}

\subsection{Adaptive Sampling for MLMC}\label{s:adaptive}
In this section, we apply the adaptive sampling method mentioned in
\cref{sec:nested-mc} to the MLMC estimator of the outer
expectation and analyse the resulting algorithm. Our aim is to
increase the variance convergence to $\Order{2^{-\ell}}$ while still
using $\Order{2^{\ell}}$ inner samples per level on average, so that
the MLMC complexity is $\Order{\tol^{-2}\logtol2}$.  In this section,
in addition to \cref{ass:bounded-density}, we also work
under the following assumption.
\begin{assumption}\label{ass:bounded-moments}
  We assume that
  for some $2 < q < \infty$, the $q^{\textnormal{th}}$ normalised,
  central moment of $X$ given $Y$ is uniformly bounded for all values
  of $Y$, i.e.
  \[\aligned
    \kappa_q &\eqdef \sup_{y} %
    { \E[\Big]{\,
        \sigma^{-q} \abs{X \!-\!  \E{X \given Y}\,}^q \given Y\!=\!y}} <
    \infty.  \endaligned\]
\end{assumption}

  Moreover, we will make repeated use of the following lemma:
  \newcommand{\ZN}{{\overline{Z}_{\!N}}}
  \begin{lemma}\label{lemma:mc-prob-bound}
    Let $\ZN$ be an average of $N$ i.i.d.\ samples of a random
    variable, $Z$, with zero mean and finite $q^\th$ moment
    \revise{for $q \geq 1$}. Then for any $z > 0$
    there exists a constant, $C_q$, depending only on $q$, such that
    \begin{eqnarray*}
      \E{\, \abs{\ZN}^q}
      &\leq& C_q\, N^{-\fpow{q}{2}}\, \E{\, \abs{Z}^q },\\[0.1in]
      \text{and } \qquad \prob{\, \abs{\ZN} \!>\! z \,}
      &\leq& \min\p*{1, C_q z^{-q} N^{-\fpow{q}{2}} \,\E{\,\abs{Z}^q}}.
    \end{eqnarray*}
  \end{lemma}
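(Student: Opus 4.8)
The plan is to prove the moment inequality first and then read off the tail bound from it, since once $\E{\abs{\ZN}^q}\le C_q N^{-q/2}\E{\abs{Z}^q}$ is in hand the probability estimate is a single application of Markov's inequality together with the trivial bound $\prob{\abs{\ZN}>z}\le 1$. Thus essentially all of the content sits in the $L^q$ bound on the centred sample average, which is a Marcinkiewicz--Zygmund-type estimate.

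For the moment bound, write $S_N\eqdef\sum_{n=1}^N Z_n$ so that $\ZN=N^{-1}S_N$, and apply the Marcinkiewicz--Zygmund inequality to the independent, mean-zero summands: there is a constant $C_q$ depending only on $q$ with
\[
  \E{\abs{S_N}^q}\le C_q\,\E*{\p*{\sum_{n=1}^N Z_n^2}^{q/2}}.
\]
In the range $q\ge 2$ supplied by \cref{ass:bounded-moments}, the map $t\mapsto t^{q/2}$ is convex, so the power-mean (Jensen) inequality gives
\[
  \p*{\sum_{n=1}^N Z_n^2}^{q/2}=N^{q/2}\p*{\frac{1}{N}\sum_{n=1}^N Z_n^2}^{q/2}\le N^{q/2-1}\sum_{n=1}^N\abs{Z_n}^q.
\]
Taking expectations and using that the $Z_n$ are identically distributed, $\E*{\sum_{n=1}^N\abs{Z_n}^q}=N\,\E{\abs{Z}^q}$, hence $\E{\abs{S_N}^q}\le C_q N^{q/2}\E{\abs{Z}^q}$. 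Dividing by $N^q$ produces the first claim, $\E{\abs{\ZN}^q}\le C_q N^{-q/2}\E{\abs{Z}^q}$.

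For the tail bound I would apply Markov's inequality to the nonnegative variable $\abs{\ZN}^q$,
\[
  \prob{\abs{\ZN}>z}=\prob{\abs{\ZN}^q>z^q}\le z^{-q}\,\E{\abs{\ZN}^q}\le C_q\,z^{-q}\,N^{-q/2}\,\E{\abs{Z}^q},
\]
and combine it with $\prob{\abs{\ZN}>z}\le 1$ to obtain the stated minimum, keeping the same constant $C_q$. The only real obstacle is the clean invocation of Marcinkiewicz--Zygmund with a $q$-only constant; everything after that is Jensen and Markov. I would also flag the regime: the power-mean step needs $q\ge2$, which is exactly what \cref{ass:bounded-moments} provides and all that is used in the sequel, whereas for $1\le q<2$ one has only a finite $q$-th moment (not a finite variance) and the $N^{-q/2}$ rate must be replaced by the weaker von Bahr--Esseen rate $N^{1-q}$.
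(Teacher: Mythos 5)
Your proof is correct and follows essentially the same route as the paper's: the paper invokes the discrete Burkholder--Davis--Gundy inequality (of which Marcinkiewicz--Zygmund is the i.i.d.\ special case), applies the same power-mean/Jensen step to pass from the square function to $q$-th moments, and obtains the tail bound via Markov's inequality combined with bounding the probability by~1. Your observation that the power-mean step requires $q \geq 2$ (despite the statement's $q \geq 1$) is a fair caveat that applies equally to the paper's own argument.
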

  \begin{proof}
    Denoting by $\en\{\}{Z_n}_{n=1}^N$ the $N$ samples of $Z$, the
    discrete \linebreak Burkholder-Davis-Gundy inequality %
    gives
    \begin{eqnarray*}
      \E{\, \abs{\ZN}^q}
      &\leq& C_q\, \E*{ \p*{N^{-2} \sum_{n=1}^N Z_n^2}^{\fpow{q}{2}} }\\
      &\leq& C_q\, \E*{ N^{-{q}/{2}-1} \sum_{n=1}^N \abs{Z_n}^q }\\
      &=& C_q\, N^{-\fpow{q}{2}}\, \E{\, \abs{Z}^q },
    \end{eqnarray*}
    where $C_q$ is a constant depending only on
    $q$~\cite{burkholder:1966}. The second result follows
    immediately using Markov's inequality and bounding the probability
    by 1.
  \end{proof}

  In particular, given $Y$, letting $Z=X(Y) - \E{X \given Y}$ and
  $\ZN = \inner_N(Y) - \E{X \given Y}$ and using the definition of
  $\kappa_q$, we have
  \begin{equation}\label{eq:tail-probability-bound}
    \prob[\Big]{\, \abs*{\inner_N(Y) -\E{X \given Y}} \!>\! d \,
      \given Y} \leq \min\p*{1, C_q \kappa_q \p*{\delta
        N^{\fpow{1}{2}}}^{-q}}.
  \end{equation}
  This is a generalisation of~\eqref{eq:chebyshev-prob-bound}
    when we have bounded $q$-moments.

\subsubsection{Algorithm}
Recall the choice~\eqref{eq:mc-adaptive-samples} for the adaptive
number of samples that was made in~\cite{broadie:adapt}. Another
choice can be motivated by the Central Limit Theorem. We know that a
Monte Carlo estimate of $\E{X \given Y}$ with $N$ samples, denoted by
$\inner_N$, has an error that is roughly bounded by
$C \sqrt{\var{X \given Y} / N}$, where $C$ is a confidence
constant. Then, we only need $N \geq C^2 {\sig^2} / {d^2}$ in order
to ensure that our estimate
$\heaviside{\inner_N(Y)} \approx \heaviside{\E{X \given Y}}$ is
correct. Imposing, again, a maximum number of samples of
$\Order{\tol^{-1}}$, we get
\begin{equation}\label{eq:mc-adaptive-samples-clt}
  N = \en*\lceil\rceil{\max\p*{\Order{\tol^{-1}}, C^2
      \frac{\sig^2}{{d^2}}}}.
\end{equation}
Compare this to~\eqref{eq:mc-adaptive-samples} and note the different
power of $\p*{{d}/{\sigma}}$ and the introduction of the
confidence constant $C$.

More generally, in the context of MLMC, the algorithm we would like to
use
should ensure, conditioned on $Y$, that the number of samples,
$N_\ell$, on level $\ell$, satisfies $N_\ell =
\en*\lceil\rceil{\fN_\ell}$ where
\begin{gather}\label{eq:mlmc-adaptive-samples}
  \fN_\ell = {N_0 4^{\ell}
      \max\p*{2^{-{\ell}}, \min\p*{ 1, \p*{C^{-1}
            \,N_0^{\fpow{1}{2}} \,2^{{\ell}}\, \frac{d}{\sig}
          }^{-r} }}},
  \end{gather}
for some given confidence constant $C$ and $1 < r < 2$ as we shall see
later. %
The choice~\eqref{eq:mlmc-adaptive-samples} is motivated
from~\eqref{eq:mc-adaptive-samples} and
\eqref{eq:mc-adaptive-samples-clt}, with the following changes: i) the
number of samples on level $\ell$ is now bounded below by
$N_0 2^{{\ell}}$ and bounded above by $N_0 4^{\ell}$, ii) we introduce
a generic power $r$, where, for a given value of $Y$, the number of
samples increases as $r$ decreases, see \cref{fig:Nl-vs-delta},
and iii) we introduce a confidence constant $C \geq 1$.

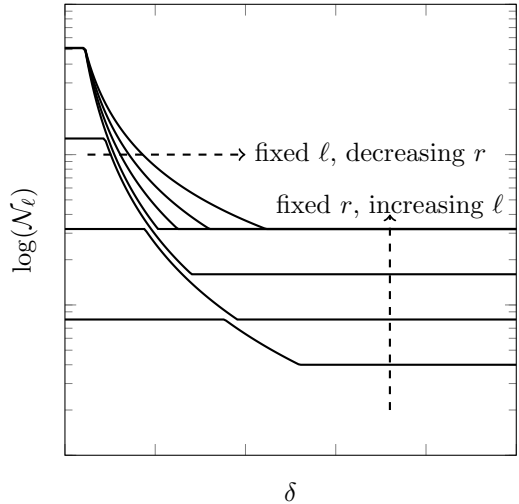
\begin{figure}[t]
  \centering
  \begin{tikzpicture}[line/.style={black, thick}]

  \begin{axis}[%
    width=\figurewidth,
    height=\figureheight,
    yticklabel style={text width=0em,align=right},
    xticklabel style={text height=0em,align=right},
    scale only axis,
    xmin=0, xmax=1,
    xlabel={$\delta$},
    ymode=log,
    ymin=1, ymax=1000,
    ylabel={$\log\p*{\fN_\ell}$},
    xticklabels={},
    yticklabels={},
    ]

    \def\Nzero{2}
    \foreach \l/\r in {4/1.2, 4/1.4, 4/1.6, 4/1.8, 3/1.8, 2/1.8, 1/1.8}
    {
      \addplot [line, domain=0:1, samples=200]
      {
        \Nzero * 4 ^ (\l) * max(2^(-\l), min(1, 1 / ( (x * sqrt(\Nzero * 4^\l)) ^ (\r)) ) )
      };
    }

    \node[right]
    at (axis cs:0.4,100) {fixed $\ell$, decreasing $r$};
    \draw [black, thick, dashed, ->] (axis cs:0.05,100) -- (axis cs:0.4,100);

    \node
    at (axis cs:0.72,45) {fixed $r$, increasing $\ell$};
    \draw [black, thick, dashed, ->] (axis cs:0.72,2) -- (axis cs:0.72,40);

  \end{axis}
\end{tikzpicture}%

  \caption{The function $\fN_\ell$ in~\eqref{eq:mlmc-adaptive-samples}
    as a function of $\delta = {d}/{\sig}$ for different
    values of $\ell$ and $r$.}\label{fig:Nl-vs-delta}
\end{figure}

  The difficulty with~\eqref{eq:mlmc-adaptive-samples} is that we do
  not know a-priori the values of $d$ and $\sigma$, given $Y$, so instead we use
  an iterative algorithm, inspired
  by~\cite[Algorithm~3]{broadie:adapt}, that doubles the number of
  samples $N_\ell$ at every iteration, then estimates $d$ and
  $\sig$. The algorithm then terminates when $N_\ell$ satisfies
  $N_\ell \geq \hfN_\ell$, where $\hfN_\ell$ is the same as
  $\fN_\ell$, but computed with the current estimates of $d$ and
  $\sigma$, denoted by $\hd$ and $\hsig$, respectively. Hence, the
  output of the algorithm, $N_\ell$ satisfies
    \begin{equation}\label{eq:nl-algorithm-output}
      \hfN_\ell \leq N_\ell < 2\, \hfN_\ell.
    \end{equation}
    The full algorithm is listed in \cref{alg:adaptive}. Note
    that the number of inner samples \revise{that} this algorithm uses to
    determine $N_\ell$ is, at most,
    $(2-2^{-\ell}) N_\ell < 2\, N_\ell$.

  \begin{algorithm}[htb]
    \KwData{$\ell, y, N_0, r$} \KwResult{$N_\ell$} \vspace*{0.1in} set
    $N_\ell \eqdef {N_0 2^{\ell}}$\; set done $\eqdef$ false\;
    \Repeat{done}{ \tcc{If we continue the adaptive algorithm, at
        least $2 N_\ell$ inner samples will be used ($N_\ell$ samples
        in this step and then another $N_\ell$ samples when computing
        the Monte Carlo estimate). Hence, terminate if $2N_\ell$
        is greater than the maximum number of inner samples.}
      \eIf{$2 N_\ell \geq N_0 4^{\ell}$} { set
        $N_\ell \eqdef N_0 4^{\ell}$\; set done $\eqdef$ true \; } {
        generate $N_\ell$ new, and independent, inner samples\;
        compute new estimates $\hd$ and $\hsig^2$ given $Y=y$\;
        \eIf{$\displaystyle N_\ell \geq N_0 4^{\ell}\, {\p*{C^{-1}
              N_0^{\fpow{1}{2}}2^{{\ell}}\frac{\ \hd \ }{\ \hsig\
              }}^{-r}}$}{ set done $\eqdef$ true\; } { set
          $N_\ell \eqdef 2 N_\ell $ } } } \vspace*{0.1in}
    \Return{} $N_\ell$\;
    \caption{Adaptive algorithm to determine $N_\ell$.}\label{alg:adaptive}
  \end{algorithm}

\subsubsection{Numerical analysis}
In this section we will show that, under
\cref{ass:bounded-density,ass:bounded-moments}, the adaptive algorithm has a
random output, $N_\ell$, that satisfies $\E{N_\ell} = \Order{2^{\ell}}$.
Moreover, we will show that, with this random choice of number of inner samples,
we have $\var*{\heaviside{\inner_{N_\ell}(Y)} - \heaviside{\E{X \given Y}}} =
\Order{2^{-\ell}}$. From here, using standard MLMC complexity analysis, the
complexity of our method is $\Order{\tol^{-2} \logtol2}$. We start in
\cref{thm:perfect-adaptivity} by proving the results when we have perfect
knowledge of $d$ and $\sigma^2$, then in \cref{thm:adaptive-hd} we consider the
case when $\hd$ and $\hsig^2$ are Monte Carlo estimates of $d$ and $\sig^2$,
given $Y$.
\revise{In what follows, we will denote a ``normalised $\delta$'' by
\begin{equation}\label{eq:z-def}
  \ndelta \eqdef {C^{-1} N_0^{\fpow{1}{2}} 2^{\ell} \delta},
\end{equation}}

\begin{lemma}[Perfect adaptive sampling]\label{thm:perfect-adaptivity}
  Under \cref{ass:bounded-density,ass:bounded-moments}, for the
  estimator~\eqref{eq:mlmc-inner-level} and the number of samples $N_\ell$
  obtained using \cref{alg:adaptive} for $1 < r < 2 - {2}/{q}$,
  $\hd=d={\abs{\E{X \given Y}}}$, and
  $\hsig^2=\sigma^2={\var{X \given Y}}$ we have
  \[
    \E{N_\ell} = \Order{2^{\ell}} \quad\text{and}\quad
    \var*{\heaviside{\inner_{N_\ell}(Y)} - \heaviside{\E{X \given Y}}}
    = \Order{2^{-\ell}}.\]
\end{lemma}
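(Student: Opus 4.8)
The plan is to exploit that, with exact knowledge of $d$ and $\sigma$, the output $N_\ell$ of \cref{alg:adaptive} is a \emph{deterministic} function of $Y$ (equivalently of $\delta$), pinned down by~\eqref{eq:nl-algorithm-output} as $\fN_\ell \le N_\ell < 2\fN_\ell$. Writing everything through the normalised variable $\ndelta$ of~\eqref{eq:z-def}, the rule~\eqref{eq:mlmc-adaptive-samples} becomes $\fN_\ell = N_0 4^\ell \max\p*{2^{-\ell}, \min\p*{1, \ndelta^{-r}}}$, and the factor $N_0^{1/2}2^\ell$ inside $\fN_\ell^{1/2}$ combines with $\delta$ to give exactly $C\,\ndelta$. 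This normalisation reduces both claims to one-variable integrals against $\rho$ that can be attacked with the elementary bounds~\eqref{eq:int-delta-identity} and~\eqref{eq:int-min-identity}. Crucially, the determinism of $N_\ell$ given $Y$ means the inner samples entering $\inner_{N_\ell}(Y)$ are independent of $N_\ell$, so the tail bound~\eqref{eq:tail-probability-bound} applies verbatim with $N=N_\ell$ --- the feature that will later fail once $\hd,\hsig$ are themselves estimated.

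For the cost bound it suffices, since $N_\ell < 2\fN_\ell$, to show $\E{\fN_\ell}=\Order{2^\ell}$. I would first peel off the floor using $\max(a,b)\le a+b$, bounding the $\max$ in $\fN_\ell$ by $2^{-\ell}+\min(1,\ndelta^{-r})$; the $2^{-\ell}$ piece integrates against $\rho$ to exactly $2^{-\ell}$, while for $\min(1,\ndelta^{-r})$ I would apply~\eqref{eq:int-delta-identity} and change variables $\delta\mapsto\ndelta$ (producing a factor $C N_0^{-1/2}2^{-\ell}$), after which~\eqref{eq:int-min-identity} with exponent $r>1$ gives a bounded integral. Multiplying back by the prefactor $N_0 4^\ell$ turns the $\Order{2^{-\ell}}$ inner integral into $\Order{2^\ell}$; the boundary term is $\Order{2^{-r\ell}}$ and negligible.

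For the variance I would bound it by the second moment exactly as in \cref{thm:inner-mc-var}, condition on $Y$, and use that any flip of the step function forces $\abs{\inner_{N_\ell}(Y)-\E{X\given Y}}\ge d$. The tail bound~\eqref{eq:tail-probability-bound} with $N=N_\ell$ then applies, and since $N_\ell\ge\fN_\ell$ I would replace $N_\ell$ by $\fN_\ell$ inside it. The crux is the identity $\delta\,\fN_\ell^{1/2}=C\,h(\ndelta)$, with $h(\ndelta)\eqdef\ndelta\,\max\p*{2^{-\ell},\min\p*{1,\ndelta^{-r}}}^{1/2}$, and the observation that $h$ is \emph{increasing} in $\ndelta$ (here $r<2$ is used); hence the integrand $\min\p*{1,C_q\kappa_q(C\,h)^{-q}}$ is non-increasing in $\delta$ and~\eqref{eq:int-delta-identity} applies. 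After the same change of variables, the whole estimate reduces to showing that $I\eqdef\int_0^\infty \min\p*{1,C_q\kappa_q(C\,h(\ndelta))^{-q}}\D\ndelta$ is bounded uniformly in $\ell$; the surviving factor $C N_0^{-1/2}2^{-\ell}$ then delivers the claim, the boundary term being $\Order{2^{-q\ell/2}}=\order{2^{-\ell}}$ since $q>2$.

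The main obstacle is bounding $I$, and this is exactly where $r<2-2/q$ is needed. I would split $h$ into its three regimes: $h=\ndelta$ for $\ndelta\le 1$, $h=\ndelta^{1-r/2}$ for $1\le\ndelta\le 2^{\ell/r}$, and $h=\ndelta\,2^{-\ell/2}$ for $\ndelta\ge 2^{\ell/r}$. The first contributes at most $1$. In the middle regime the integrand is comparable to $\min(1,\ndelta^{-q(2-r)/2})$, whose integral over $[1,\infty)$ converges precisely when $q(2-r)/2>1$, i.e.\ $r<2-2/q$. In the outer regime the integral is a constant times $2^{\ell[q/2+(1-q)/r]}$, and the identical arithmetic ($q(2-r)\ge2\iff q/2+(1-q)/r\le0$) shows the exponent is non-positive under the same inequality, so this piece stays bounded. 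Collecting the three regimes bounds $I$ uniformly in $\ell$, which yields $\E*{\p*{\heaviside{\inner_{N_\ell}(Y)}-\heaviside{\E{X\given Y}}}^2}=\Order{2^{-\ell}}$ and hence the stated variance bound.
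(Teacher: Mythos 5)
Your proposal is correct and follows essentially the same route as the paper's proof: exploit $\fN_\ell \le N_\ell < 2\fN_\ell$, apply the tail bound~\eqref{eq:tail-probability-bound} together with $N_\ell \ge \fN_\ell$, and reduce both claims to one-dimensional integrals in $\ndelta$ handled by~\eqref{eq:int-delta-identity} and~\eqref{eq:int-min-identity}, with $r>1$ giving the work bound and $q(2-r)/2>1$ (i.e.\ $r<2-2/q$) giving the variance bound. Your explicit three-regime split of $h$ and the monotonicity check justifying~\eqref{eq:int-delta-identity} are slightly more detailed than the paper, which simply bounds the capped minimum by $1+\ndelta^{rq/2}$ and evaluates a single integral, but the substance is identical.
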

\noindent \emph{Comment.} Even though the setting of this lemma is
idealistic, in that it assumes perfect knowledge of $d$ and
$\sigma^2$, its proof still illustrates important points. The first
point is the usage of~\eqref{eq:tail-probability-bound} to bound the
tail probability of a Monte Carlo average. The second point is that
the bounds on the value of $r \in (1,2-2/q)$ are needed even with such
perfect knowledge.

  As stated previously, as $r$ increases, the required number of inner
  samples, for the same value of $y$, decreases. This means that,
  subject to the condition $r < 2 - {2}/{q}$ or when
  $q \to \infty$, we want to take $r$ as close as possible to 2 to
  reduce the total number of needed inner samples while still
  maintaining the same variance convergence rate. On the other hand,
  if, for example, we only have bounded fourth moments, i.e., $q=4$,
  then we must have $r < 3/2$ to have the same convergence rate of the
  variance.
\begin{proof}
  In this lemma, we are supposing that we have perfect knowledge of
  $d$ and $\sigma$, and hence $\hfN = \fN$. Recall that the output of
  the adaptive algorithm satisfies~\eqref{eq:nl-algorithm-output}.
  Then, given $N_\ell < 2\, \fN_\ell$ we have,
\begin{align}\label{eq:work-bound}  \E{N_\ell} &\leq 2 N_0 4^{\ell}\int_0^\infty
               \max\p*{2^{-\ell},
               \min\p*{1,\p*{C^{-1} N_0^{\fpow{1}{2}} 2^{\ell}\, \delta}^{-r}}}
               \rho(\delta) \D \delta \\
  \notag  &\leq 2 N_0 2^{\ell} + 2 N_0 4^{\ell}\int_0^\infty
               \min\p*{1,\p*{C^{-1} N_0^{\fpow{1}{2}} 2^{\ell}\, \delta}^{-r}}
               \rho(\delta) \D \delta \\
  \notag &\leq
           2 N_0 2^{\ell} + 2 N_0 4^{\ell}\rho_0 \int_0^\infty
           {\min\p*{1,\p*{C^{-1} N_0^{\fpow{1}{2}} 2^{\ell}\, \delta}^{-r}}}
           \D \delta \\
  \notag &\hskip 2cm + 2 N_0 4^{\ell} {\min\p*{1,\p*{C^{-1} N_0^{\fpow{1}{2}} 2^{\ell}\,
               \delta_0}^{-r}}} \\
  \notag &\leq
           2 N_0 2^{\ell} +  2 C N_0^{\fpow{1}{2}} \rho_{0}
           \frac{r}{r-1} 2^{\ell} + 2 C^{r} N_0^{(2-r)/2}\,\delta_0^{-r} 2^{(2-r)\ell},
\end{align}
where we first used~\eqref{eq:int-delta-identity} and
then~\eqref{eq:int-min-identity}. Since $2-r < 1$, we have
$\E{N_\ell} = \Order{2^{\ell}}$.

On the other
hand, by~\eqref{eq:tail-probability-bound},
\[\aligned
  \var*{\heaviside{\inner_{N_\ell}(Y)} \!-\! \heaviside{\E{X \given Y}}} &\leq
  \E*{\prob*{\abs*{\, \inner_{N_\ell}(Y) \!-\! \E{X \given Y} \,} \!>\!
      d \given Y}} \\ &\leq \E*{\min\p*{1, \ C_q\, \kappa_q
      \p*{\frac{d}{\sigma} N_\ell^{{1}/{2}}}^{-q}}}.
  \endaligned
\]
\revise{Using the definition of $\ndelta$ in \eqref{eq:z-def} and
  since $N_\ell \geq \fN_\ell$}, we have
  \begin{equation}\label{eq:var-bound}
  \begin{aligned}
     \var{\heaviside{\inner_{N_\ell}(Y)} \!-\!
    \heaviside{\E{X \given Y}}}
    &\leq \E*{\min\p*{1, \ C_q\, \kappa_q
      C^{-q} {\ndelta^{-q}}\p*{\frac{ N_\ell}{N_0 4^{\ell}}}^{-\fpow{q}{2}}}}\\
    \notag & \hskip -2.5cm \leq  \E*{\min\p*{1, \ C_q\kappa_q C^{-q}\,  {\ndelta^{-q}}
             \min\p*{2^{\fpow{\ell q}{2}}, \max\p{1, {\ndelta^{\fpow{rq}{2}}}}}}}\\
         \notag & \hskip -2.5cm \leq \rho_0 \int_0^{\infty}\!\!
         \min\p*{1, \ C_q\kappa_q C^{-q}\, {\ndelta^{-q}}
           \max\p*{1, \ndelta^{\fpow{rq}{2}}}} \D \delta \\
         \notag & \hskip -1.5cm + \min\p*{1, \ C_q\, \kappa_q
           2^{-\fpow{\ell q}{2}}
           N_0^{-\fpow{q}{2}}\delta_0^{-q}} \\
         \notag &\hskip -2.5cm \leq \rho_{0} C N_0^{-\fpow{1}{2}}
         2^{-\ell} \int_0^{\infty}\!\! \min\p*{1, \ C_q \kappa_q
           C^{-q} \, \ndelta^{-q}
           \p*{1 + \ndelta^{\fpow{rq}{2}}}} \D \ndelta \\
           \notag & \hskip -1.5cm + \ C_q\, \kappa_q 2^{-\fpow{\ell
               q}{2}} N_0^{-\fpow{q}{2}} \delta_0^{-q},
  \end{aligned}
\end{equation}
  where we first used~\eqref{eq:int-delta-identity} and then changed
  the variable of integration from $\delta$ to $\ndelta$. The integral in
  the first term is a constant independent of $\ell$ and can be
  computed using~\eqref{eq:int-min-identity}
  since $q - rq/2> 1$. Hence, %
  since $q > 2$ the variance is $\Order{2^{-\ell}}$.
\end{proof}

\def\pq{{q}}  %
\def\vq{{p}}  %

{
  \begin{theorem}[Adaptive sampling with $\hd$ and
    $\hsig$-estimators]\label{thm:adaptive-hd}
    Let \cref{ass:bounded-density,ass:bounded-moments} hold and consider the
    estimator~\eqref{eq:mlmc-inner-level} and the number of samples $N_\ell$
    obtained using \cref{alg:adaptive} for{
      \begin{equation}\label{eq:r_adaptive_bound}
    1 < r < 2 - \frac{\sqrt{4q+1}-1}{q}.
  \end{equation}}
Assume further that in every iteration of the algorithm given $Y$, the
current number of samples, $N_\ell$, and
$\en\{\}{x^{(m)}(Y)}_{n=1}^{N_\ell}$ being i.i.d.\ samples of $X$
given $Y$, we estimate $d$ and $\sigma^2$ by
\begin{align}
  \label{eq:hd-estimator}\hd &= \abs*{\inner_{N_\ell}\p{Y}} = \abs*{\frac{1}{N_\ell}\sum_{n=1}^{N_\ell} x^{(n)}(Y)}\\
  \label{eq:hsig-estimator} \text{and} \hskip 1cm \hsig^2 &= \frac{1}{{N_\ell}} \sum_{n=1}^{{N_\ell}}
                                                            \p*{x^{(n)}(Y) - \inner_{N_\ell}\p{Y}}^2,
\end{align}
respectively. Then we have
\[
  \E{N_\ell} = \Order{2^{\ell}} \quad \text{and} \quad
  \var*{\heaviside{\inner_{N_\ell}(Y)} - \heaviside{\E{X \given Y}}} =
  \Order{2^{-\ell}}.
\]
\end{theorem}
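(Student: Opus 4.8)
The plan is to follow the architecture of \cref{thm:perfect-adaptivity}, reducing both claims to conditional tail bounds that are integrated against $\rho$ via \eqref{eq:int-delta-identity}, the change of variables $\delta\mapsto\ndelta$ of \eqref{eq:z-def} (whose Jacobian supplies the decisive factor $2^{-\ell}$), and \eqref{eq:int-min-identity}. The only genuinely new feature is that $N_\ell$ is now a random stopping count correlated with the noisy estimates $\hd,\hsig$, so I first record the two structural facts that decouple this difficulty. First, the samples that \cref{alg:adaptive} uses to \emph{select} $N_\ell$ are independent of the fresh batch of $N_\ell$ samples forming the MLMC estimator $\inner_{N_\ell}$ in \eqref{eq:mlmc-inner-level}; hence, conditioning on $(Y,N_\ell)$ and applying \eqref{eq:tail-probability-bound} gives, exactly as in the idealized lemma,
\[
  \var*{\heaviside{\inner_{N_\ell}(Y)}-\heaviside{\E{X\given Y}}}\leq\E*{\min\p*{1,\ C_q\kappa_q\p*{\tfrac{d}{\sigma}N_\ell^{\fpow12}}^{-q}}},
\]
the sole change being that $N_\ell$ inside the expectation is random. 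Second, the estimator errors are controlled by \cref{lemma:mc-prob-bound}: since $\hd=\abs{\inner_{N_\ell}(Y)}$ we have $\abs{\hd-d}\leq\abs{\inner_{N_\ell}(Y)-\E{X\given Y}}$, so the error of $\hd$ inherits the $q$\textsuperscript{th}-moment tail of $X-\E{X\given Y}$, whereas $\hsig^2$ is a sample average of $(x^{(n)}-\inner_{N_\ell})^2$, so its deviation from $\sigma^2$ is controlled only through the $(q/2)$\textsuperscript{th} moment of $(X-\E{X\given Y})^2-\sigma^2$, which is finite under \cref{ass:bounded-moments}. This asymmetry---full $q$ moments for the mean, only $q/2$ for the variance---is what ultimately shrinks the admissible range of $r$.

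With these in hand I would split according to the two ways the random $N_\ell$ can depart from its ideal value $\fN_\ell$. For the work bound $\E{N_\ell}=\Order{2^\ell}$ the dangerous event is \emph{over}-sampling, which requires the algorithm to run past $\fN_\ell$; by \eqref{eq:nl-algorithm-output} this forces a check at some count $k>\fN_\ell$ to have failed, i.e.\ the current $\hat\ndelta$ to \emph{under}-estimate $\ndelta$, driven either by $\hd$ falling below $d$ or by $\hsig^2$ overshooting $\sigma^2$. The first has probability $\Order{(\delta k^{\fpow12})^{-q}}$ by \eqref{eq:tail-probability-bound}, and the second $\Order{k^{-q/4}}$ times a factor decaying in the required overshoot (an upper-tail bound for the sample variance); summing $k\,\prob{N_\ell\ge k\mid Y}$ over the geometric grid of candidate counts and integrating in $\delta$ reproduces the $\Order{2^\ell}$ bound up to constants. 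For the variance the dangerous event is \emph{under}-sampling: the algorithm stops early at some $k<\fN_\ell$ because $\hat\ndelta$ \emph{over}-estimates $\ndelta$, leaving the fresh estimator with too few samples and an inflated conditional tail $\min(1,C_q\kappa_q(\delta k^{\fpow12})^{-q})$. I would bound the probability of stopping at such a $k$ by the probability of the corresponding over-estimate of $\hd/\hsig$, multiply by the inflated tail, sum over the grid, and integrate in $\ndelta$.

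The hard part is this last, under-sampling, contribution. Unlike in \cref{thm:perfect-adaptivity}, the over-estimation probability no longer decays fast enough to be absorbed for free: a stop at count $k$ demands an over-estimate of $\ndelta$ by the multiplicative factor $(\fN_\ell/k)^{\fpow1r}$, and because this factor is produced by estimators with only $q$ (respectively $q/2$) moments, its probability decays only as a power rather than super-polynomially. Balancing this decay against the $N_\ell^{-q/2}$ inflation of the undersampled conditional variance, and requiring both that the per-$Y$ sum telescopes to something integrable in $\ndelta$ and that the $\ndelta$-integral---after the $2^{-\ell}$ Jacobian---still decays like $2^{-\ell}$, is exactly where the quadratic constraint enters: the governing exponent works out to $q(2-r)^2/(2r)$, and demanding that it be at least one is precisely $q(2-r)^2>2r$, equivalently $r<2-\frac{\sqrt{4q+1}-1}{q}$ as in \eqref{eq:r_adaptive_bound}.

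I expect the bookkeeping of the caps $N_0 2^\ell\le N_\ell\le N_0 4^\ell$ to be the most delicate step, since it is these caps---the inflation saturating at $2^{\ell q/2}$ and the over-estimate probability being truncated at one---that keep the otherwise-divergent lower tail of $1/\hsig$ from spoiling the bound; with only $q/2$ moments available, $\hsig$ underestimates by a constant factor are not super-polynomially rare, so the floor $N_\ell\ge N_0 2^\ell$ must do the work of capping the resulting variance. As in \cref{thm:perfect-adaptivity}, convergence of the key $\ndelta$-integral via \eqref{eq:int-min-identity} then closes the argument and delivers $\E{N_\ell}=\Order{2^\ell}$ together with $\var*{\heaviside{\inner_{N_\ell}(Y)}-\heaviside{\E{X\given Y}}}=\Order{2^{-\ell}}$.
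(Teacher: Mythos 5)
Your proposal reproduces the paper's architecture almost exactly: the same two structural facts (fresh samples independent of the selection samples, so the tail bound \eqref{eq:tail-probability-bound} applies conditionally on the random $N_\ell$; full $q$-th moments controlling $\hd$ but only $(q/2)$-th moments controlling $\hsig^2$, which is the content of \cref{thm:variance-err-bound-det,thm:variance-err-bound}), the same split into over-sampling (work) and under-sampling (variance), and the same mechanism for the uncontrollable lower tail of $\hsig$ --- a constant-factor under-estimate has probability $\Order{2^{-q\ell/4}}$ while the floor $N_\ell \geq N_0 2^{\ell}$ caps the conditional variance at $\Order{2^{-\ell/2}}$ after integrating in $\delta$, giving $\Order{2^{-\ell}}$ since $q>2$; this is precisely the paper's decomposition \eqref{eq:var-decompose} with the constant $b$. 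Your work bound is a harmless variant: you charge a single failed check (with the overshoot factors), giving polynomial decay in the number of extra doublings, whereas the paper multiplies the failure probabilities of the consecutive, mutually independent iterations to get the super-polynomial bound \eqref{eq:too-many-samples-prob}; since $q>2$, and since you retain the overshoot decay on the sample-variance tail, your version suffices.

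The gap is in the step you yourself flag as the hard part: the balance that is supposed to produce \eqref{eq:r_adaptive_bound}. The paper's device is to apply \eqref{eq:tail-probability-bound} to the fresh estimator with an \emph{intermediate} moment order $p$ strictly below $q$, chosen so that $2/(2-r) < p < q(2-r)/r$: the upper bound on $p$ makes the sum over under-sampled levels $\ell'$ of (stopping probability $\Order{2^{-q(2\ell-\ell')(1/r-1/2)}}$) $\times$ (inflation $2^{p(2\ell-\ell')/2}$) geometric with negative exponent $u = q(r-2)/r + p < 0$, hence dominated by $\ell'\approx\ell^*$, while the lower bound makes the resulting $\ndelta^{-p}\max(1,\ndelta^{rp/2})$ integrable against the $2^{-\ell}$ Jacobian via \eqref{eq:int-min-identity}; the constraint \eqref{eq:r_adaptive_bound} is exactly the non-emptiness of this interval, $q(2-r)^2 > 2r$. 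Your sketch contains no such intermediate exponent: you balance against the full ``$N_\ell^{-q/2}$ inflation''. With exponent $q$ this does \emph{not} work out to the claimed governing exponent $q(2-r)^2/(2r)$: if the inflation is pulled outside the pointwise cap (as one must to take the expectation of $N_\ell^{-q/2}$ over the stopping level), the geometric sum has exponent $u=q(2r-2)/r>0$, is dominated by $\ell'=\ell$, and the final bound degrades to $\Order{2^{-\ell/r}}$, which is not $\Order{2^{-\ell}}$ for $r>1$; if instead you keep the cap $\min(1,\cdot)$ inside the sum term by term (as your phrase ``inflated conditional tail'' suggests), the argument does close --- the peak of probability times capped tail is of order $\ndelta^{-q(2-r)/r}$ --- but then the sufficient condition is $q(2-r) > r$, not the quadratic $q(2-r)^2 > 2r$. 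In short, your plan is completable (the theorem's hypothesis implies the weaker condition your term-by-term version needs), but the assertion that the quadratic constraint emerges from the balance you describe is unsupported; to actually arrive at \eqref{eq:r_adaptive_bound} one needs the paper's free parameter $p$, which is the one genuinely new idea in its proof and is missing from your sketch.
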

\revise{Before proving this theorem, we will bound the probability of
  incurring a given error when estimating the variance, $\sigma^2$,
  with~\eqref{eq:hsig-estimator}.}

\begin{corollary}\label{thm:variance-err-bound-det}
  \revise{Let \cref{ass:bounded-moments} hold.
  For a fixed $\ell'=\ell \ldots 2\ell$ and a given $Y$, denote by
  $\hsig_{\ell'}$ the estimate of $\sig$ computed using
  $N_0 2^{\ell'}$ samples of $X$ given $Y$ by
  \begin{equation}\label{eq:sig-estimator}
    \hsig_{\ell'}^2 = \frac{1}{N_0 2^{\ell'}} \sum_{n=1}^{N_0
      2^{\ell'}} \p*{x^{(n)}(Y) - \inner_{N_0
        2^{\ell'}}\p{Y}}^2,
  \end{equation}}
  Then, for any constant $c_1 > 0$, there exists a constant $c_2$,
  depending only on $N_0, \kappa_q, q$ and $c_1$, such that
  \[
    \prob[\Big]{\abs*{\hsig_{\ell'}^2 - \sig^2} > c_1 \sig^2 \given
      Y} \leq c_2 2^{-q\ell'/4},
  \]
\end{corollary}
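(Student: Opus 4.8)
The plan is to control the sample-variance error by splitting it into two fluctuation terms, each of which is an average of i.i.d.\ zero-mean random variables to which \cref{lemma:mc-prob-bound} applies. Writing $\mu \eqdef \E{X \given Y}$ and $M \eqdef N_0 2^{\ell'}$, the standard algebraic identity for the sample variance gives
\[
  \hsig_{\ell'}^2 - \sig^2 = \underbrace{\p*{\frac{1}{M}\sum_{n=1}^{M}\p*{x^{(n)}(Y) - \mu}^2 - \sig^2}}_{A} \;-\; \underbrace{\p*{\inner_M(Y) - \mu}^2}_{B}.
\]
Since $B \geq 0$, the event $\{\abs{\hsig_{\ell'}^2 - \sig^2} > c_1 \sig^2\}$ forces either $\abs{A} > \tfrac{c_1}{2}\sig^2$ or $B > \tfrac{c_1}{2}\sig^2$, so a union bound reduces the problem to bounding these two probabilities separately, conditional on $Y$.

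For the term $B$, I would apply \cref{lemma:mc-prob-bound} directly to $Z = X - \mu$, which is zero-mean with $\E{\abs{Z}^q \given Y} = \sig^q \kappa_q < \infty$ by \cref{ass:bounded-moments}. Since $B > \tfrac{c_1}{2}\sig^2$ is equivalent to $\abs{\inner_M(Y) - \mu} > \sqrt{c_1/2}\,\sig$, the lemma with threshold $z = \sqrt{c_1/2}\,\sig$ yields a bound of order $M^{-q/2} = N_0^{-q/2} 2^{-q\ell'/2}$, where the factor $\sig^{-q}$ coming from $z^{-q}$ cancels against $\E{\abs{Z}^q \given Y}$. This decays strictly faster than the claimed rate and is therefore harmless.

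The term $A$ carries the dominant rate. Here I set $W = (X - \mu)^2 - \sig^2$, which is again zero-mean since $\E{(X-\mu)^2 \given Y} = \sig^2$, so that $A$ is exactly the average $\overline{W}_{\!M}$. To invoke \cref{lemma:mc-prob-bound} I need a finite moment of $W$, and the natural choice is the exponent $q/2 > 1$: from $\abs{W} \leq (X-\mu)^2 + \sig^2$ and the inequality $\p*{a+b}^{q/2} \leq 2^{q/2-1}\p*{a^{q/2}+b^{q/2}}$, together with \cref{ass:bounded-moments}, one obtains $\E{\abs{W}^{q/2} \given Y} \leq 2^{q/2-1}(\kappa_q + 1)\sig^q$. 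Applying the lemma with moment order $q/2$ and threshold $z = \tfrac{c_1}{2}\sig^2$, the factor $z^{-q/2}$ supplies $\sig^{-q}$, which cancels the $\sig^q$ above and leaves a bound of order $M^{-q/4} = N_0^{-q/4} 2^{-q\ell'/4}$. Summing the two contributions, using $2^{-q\ell'/2} \leq 2^{-q\ell'/4}$, and absorbing all $\ell'$-independent constants (depending only on $N_0, \kappa_q, q$ and $c_1$) into a single $c_2$ then gives the stated inequality.

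The only genuinely non-routine step is recognising that the correct moment order for the term $A$ is $q/2$ rather than $q$: because the sample variance averages the \emph{squared} deviations $(X-\mu)^2$, the integrability furnished by \cref{ass:bounded-moments} only provides the $(q/2)^{\th}$ moment of those squares, which halves the concentration exponent to $q/4$ and thereby determines the overall rate of the corollary. Everything else is an algebraic identity, a union bound, and two bookkeeping applications of the already-established concentration lemma.
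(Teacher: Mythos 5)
Your proposal is correct and follows essentially the same route as the paper's proof: the same sample-variance decomposition into the average of squared deviations minus the squared mean-deviation, the same union bound at level $\tfrac{c_1}{2}\sig^2$, and the same two applications of \cref{lemma:mc-prob-bound} with moment orders $q/2$ (giving the dominant $2^{-q\ell'/4}$ rate) and $q$ (giving the faster $2^{-q\ell'/2}$ rate). The point you flag as the key step, using the $(q/2)^{\th}$ moment for the squared deviations, is exactly what the paper does via its $C_{q/2}$ constant.
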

\begin{proof}
  We can write
    \[
    \hsig^2_{\ell'} = \frac{1}{N_0 2^{\ell'}}\sum_{n=1}^{N_0 2^{\ell'}} \p*{x^{(n)}(Y) -
      \E{X \given Y}}^2 - \p*{\inner_{N_0 2^{\ell'}}(Y)-\E{X \given Y}}^2.
  \]
  Next, using \cref{lemma:mc-prob-bound} and the fact that
  $\E{\p*{X - \E{X\given Y}}^2 \given Y} = \sigma^2$, yields
  \[
      \begin{aligned}
        &\prob[\Big]{\abs*{\hsig^2_{\ell'} - \sig^2 } > c_1 \sig^2
          \given Y} %
        \\
        &\leq \prob*{\abs*{ \frac{1}{N_0 2^{\ell'}}\sum_{n=1}^{N_0
              2^{\ell'}} \p*{x^{(n)}(Y) - \E{X \given Y}}^2 - \sig^2}
          > \frac{1}{2}
          c_1 \sig^2 \; \given \; Y} \\
        & \hskip 1cm + \prob*{\abs*{\inner_{N_0 2^{\ell'}}(Y) - \E{X
              \given Y} } >
          \p*{\frac{c_1}{2}}^{1/2} \sig \given Y} \\
        &\leq 2^{q/2}\; C_{q/2}\; c_1^{-q/2}\;\kappa_{q}\; \p*{N_0
          2^{\ell'}}^{-q/4} \\
        & \hskip 1cm+ 2^{q/2}\; C_{q} \;
        c_1^{-q/2}\;\kappa_{q}\; \p*{N_0 2^{\ell'}}^{-q/2}\\
        &\leq c_2 2^{-q\ell'/4}.
      \end{aligned}
    \]
  \end{proof}
\revise{
  \begin{corollary}\label{thm:variance-err-bound}
    Let \cref{ass:bounded-moments} hold and assume that at
    every iteration in the loop inside \cref{alg:adaptive},
    $\sig$ is estimated by $\hsig$ as
    in~\eqref{eq:hsig-estimator}. Then, there exists a constant $c_3$
    such that
    \[
      \prob[\Big]{\abs{\hsig^2 - \sig^2} > c_1 \sig^2 \given Y}
      \leq c_3 2^{-q\ell/4},
    \]
    In particular, this is true for the \emph{final} estimate of
    $\sigma$ computed in \cref{alg:adaptive} before returning
    $N_\ell$.
\end{corollary}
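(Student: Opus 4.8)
The plan is to deduce this corollary from its fixed-index predecessor, \cref{thm:variance-err-bound-det}, by a union bound over the finitely many iterations the loop can perform. First I would observe that \cref{alg:adaptive} initialises $N_\ell = N_0 2^\ell$ and thereafter only doubles it, terminating once $2N_\ell \geq N_0 4^\ell$. Consequently, whenever a variance estimate is formed via \eqref{eq:hsig-estimator}, the number of inner samples present is of the form $N_0 2^{\ell'}$ with $\ell' \in \{\ell, \ell+1, \ldots, 2\ell\}$, and that estimate coincides exactly with the quantity $\hsig_{\ell'}^2$ of \eqref{eq:sig-estimator}. In particular, the \emph{final} estimate $\hsig^2$ returned by the algorithm equals $\hsig_{\ell^*}^2$ for some index $\ell^* \in \{\ell, \ldots, 2\ell\}$.

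The difficulty, and the reason the result is not an immediate consequence of \cref{thm:variance-err-bound-det}, is that $\ell^*$ is \emph{not} deterministic: it is a stopping index produced by the termination test, which itself depends on $\hd$ and $\hsig$. Hence I cannot invoke the per-iteration bound at a single fixed $\ell'$. Instead I would note that, whatever value $\ell^*$ realises, the event $\{\abs{\hsig^2 - \sig^2} > c_1 \sig^2\}$ for the final estimate is contained in $\bigcup_{\ell'=\ell}^{2\ell} \{\abs{\hsig_{\ell'}^2 - \sig^2} > c_1 \sig^2\}$, since the corresponding error event is always one of the terms in this union. A union bound then yields
\[
  \prob[\Big]{\abs{\hsig^2 - \sig^2} > c_1 \sig^2 \given Y} \leq \sum_{\ell'=\ell}^{2\ell} \prob[\Big]{\abs{\hsig_{\ell'}^2 - \sig^2} > c_1 \sig^2 \given Y}.
\]
Crucially, the validity of this step does not require any independence between iterations, so the fact that successive batches of samples are reused or freshly drawn is irrelevant here.

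Applying \cref{thm:variance-err-bound-det} to each term — with the same constant $c_2$, which depends only on $N_0, \kappa_q, q$ and $c_1$ and not on $\ell'$ — bounds the right-hand side by $\sum_{\ell'=\ell}^{2\ell} c_2 2^{-q\ell'/4}$. Extending the sum to infinity and summing the geometric series in the ratio $2^{-q/4} < 1$ gives a bound of the form $c_3 2^{-q\ell/4}$ with $c_3 = c_2 / (1 - 2^{-q/4})$, the smallest index $\ell' = \ell$ dominating the tail. Since the final returned estimate is in particular one of the $\hsig_{\ell'}$, the stated ``in particular'' follows with no extra work, and for any single fixed iteration the claim is even easier, being a direct application of \cref{thm:variance-err-bound-det}. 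The only genuine bookkeeping is verifying the admissible range of sample counts; I expect that to be the sole (minor) obstacle, as the probabilistic content is entirely inherited from the preceding corollary.
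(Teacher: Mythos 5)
Your proposal is correct and takes essentially the same route as the paper's own proof: a union bound over the possible iteration indices $\ell' \in \{\ell, \dots, 2\ell\}$, an application of \cref{thm:variance-err-bound-det} to each term (with $c_2$ independent of $\ell'$), and summation of the geometric series to get $c_3 = c_2/(1-2^{-q/4})$. The paper states this in two lines; your version merely makes explicit the justification (the randomness of the stopping index, and the fact that the union bound needs no independence) that the paper leaves implicit.
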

\begin{proof} Using \cref{thm:variance-err-bound-det},
  \[\aligned
    \prob*{\abs*{\hsig - \sig^2} > c_1 \sig^2 \given Y } &\leq
    \sum_{\ell'=\ell}^{2\ell} \prob[\Big]{\abs*{\hsig_{\ell'} -
        \sig^2} >
      c_1 \sig^2 \given Y }\\
    &\leq c_2 \sum_{\ell'=\ell}^{2\ell} 2^{-q\ell'/4} \leq
    \frac{c_2}{1-2^{-q/4}} 2^{-q \ell /4}.  \endaligned
    \]
  \end{proof}
}
  We are now ready to prove \cref{thm:adaptive-hd}.

\begin{proof}
  First, for a given value of $Y$ and a corresponding
  $\delta = {d}/{\sigma}$, assume that, with perfect knowledge of
  $\delta$, the adaptive algorithm would return $N_0 2^{\ell^*}$ as
  the number of inner samples, for $\ell \leq \ell^* \leq
  2\ell$. Given the stopping condition, we have
    \begin{equation}\label{eq:ellstar-cond}
  \frac{2^{(\ell^*-1)}}{4^{\ell}} \ <\ \p*{C^{-1} N_0^{\fpow{1}{2}}
    2^{\ell} \delta }^{\!\!-r} \leq\ \frac{2^{\ell^*}}{4^{\ell}},
\end{equation}
whenever $\ell < \ell^* \leq 2\ell$ for the left inequality
  and whenever $\ell \leq \ell^* < 2\ell$ for the right inequality.
  \revise{Moreover, suppose that, for a given $Y$,
    \cref{alg:adaptive} returns
    $N_\ell = N_0 2^{\widehat \ell}$ for
    $\ell \leq \widehat \ell \leq 2\ell$ and let $\hd$ and $\hsig^2$
    be the \emph{final} estimates of $d$ and $\sig$, respectively,
    that were computed in \cref{alg:adaptive}.}
The
proof proceeds by analysing the probability that $\widehat \ell$
differs significantly from $\ell^*$ to bound the variance and
work. In what follows, the constants in the $\Order{\cdot}$
  notation depend on $\kappa_q, q, N_0, C$ and $r$ only.

  \subparagraph{Bounding the work} To bound the work, we consider the
  case in which the adaptive algorithm terminates on a level $\ell'$
  such that $\ell' \geq \ell^* + 3$ for
  $\ell \leq \ell^* \leq 2\ell-3$, i.e., the case of \revise{returning
    too many inner samples; where ``too many'' here means that the
    adaptive algorithm returned a factor of at least $2^3$ more samples than it
    would have returned had we used $d$ and $\sig$ instead of their
    approximations, $\hd$ and $\hsig$, respectively. The choice
    $\ell^*+3$ is somewhat arbitrary; $\ell^*+2$ is the minimum to
    ensure the positivity of certain terms in the proof, but this
    particular choice simplifies the subsequent algebra.
  } In any cases, $\ell' \geq \ell^* + 3$
  implies that the termination condition of the adaptive algorithm was
  \emph{not} satisfied in each $\p{\ell''-\ell}^\th$ iteration of the
  algorithm where $\ell^*+ 2 \leq \ell'' < \ell'$.  This, with the
  right inequality in~\eqref{eq:ellstar-cond}, yield
\[
\p*{{C^{-1} N_0^{1/2}2^{\ell} \hdelta_{\ell''}}}^{\!\!-r}
\!\geq\, \frac{2^{\ell''}}{4^\ell}
\,\geq\, 4\, \frac{2^{\ell^*}}{4^\ell}
\,\geq \, 4 \p*{C^{-1}  N_0^{1/2}{2^{\ell} \delta }}^{\!\!-r}
~~~\Longrightarrow ~~~
\hdelta_{\ell''} < 4^{-1/r} \delta,
\]
where $\hdelta_{\ell''} = \hd_{\ell''}/\hsig_{\ell''}$ and
$\hd_{\ell''}$ and $\hsig_{\ell''}^2$ denote the Monte Carlo estimates
of $d$ and $\sig^2$, respectively, using $N_0 2^{\ell''}$ inner
samples. Then, since the inner samples used in the iterations of
\cref{alg:adaptive} are mutually independent, we have
\[\aligned
  \prob{\widehat \ell = \ell' \given Y}\, &\leq\, \prod_{\ell'' =
    \ell^*+2}^{\ell'-1}
  \prob*{\frac{d}{\hd_{\ell''}}\cdot\frac{\hsig_{\ell''}}{\sig} >
    4^{1/r} \given Y } \,\\
  &\leq \prod_{\ell'' = \ell^*+2}^{\ell'-1} \p*{\prob*{d > 2^{1/r}
      \hd_{\ell''} \given Y} + \prob*{\hsig_{\ell''}^2 >
      4^{1/r}\sig^2\given Y}}, \endaligned
\]
for any $\ell'$ such that $\ell' \geq \ell^{*} +3$.  Using the right
inequality in~\eqref{eq:ellstar-cond}
and~\eqref{eq:tail-probability-bound}, yields
\[
  \begin{aligned}
    & \prob*{d > 2^{1/r} \hd_{\ell''} \given Y} =\,
    \prob*{d -\hd_{\ell''} > \p*{1-2^{-1/r}} d\given Y}\\
    & \hskip 0.3cm \leq\, \prob*{\,\abs*{\inner_{N_0 2^{\ell''}}\p{Y} - \E{X \given
          Y}} > \p*{1-2^{-1/r}} C \sig N_0^{-\fpow{1}{2}} \, 2^{-\ell}
      2^{(2\ell-\ell^*)/r}\given Y} \\
    &\hskip 0.3cm  \leq \p*{1-2^{-1/r}}^{-q} C_\pq \kappa_\pq C^{-\pq}
    \,2^{-\pq(\p{2\ell-\ell^*}/{r}-\p{2\ell-\ell''}/{2})}\\
    &\hskip 0.3cm \leq \Order{2^{-q(\ell'' - \ell^*)/2}}.
  \end{aligned}
\]
Moreover, using \cref{thm:variance-err-bound}, we
  have, for any $Y$,
\[
  \aligned \prob*{\hsig_{\ell''}^2 > 4^{1/r}\sig^2 \given Y}
  &\leq \prob*{\abs{\hsig_{\ell''}^2 - \sig^2} > \p{4^{1/r}-1}\sig^2\given Y}\\
  &\leq \Order{ 2^{-q(\ell'' - \ell^*)/4}}.  \endaligned
\]
Hence for any $\ell'$ such that
  $\ell^{*} +3\leq \ell' \leq 2\ell$ and
  $\ell \leq \ell^* \leq 2\ell -3$, we have
    \begin{equation}\label{eq:too-many-samples-prob}
  \prob{\widehat \ell = \ell' \given Y}\, \leq
  \prod_{\ell''=\ell^*+2}^{\ell'-1} \Order{ 2^{-q(\ell''-\ell^*)/4}}
  \leq \Order {2^{-q\p{\ell'-\ell^*}^2 / 8}}.
\end{equation}
In general, this bounds
  $\prob{\widehat \ell = \ell' \given Y}$ for all
  $\ell \leq \ell^* \leq 2\ell$ since
  $\prob{\widehat \ell = \ell' \given Y} = 0$ for
  $\ell^*+3 \leq \ell' \leq 2\ell$ whenever
  $2\ell-3 < \ell^* \leq 2\ell$.  Next, conditional on $Y$ we compute
the expected number of samples, $N_\ell$, as follows
\[\aligned
  \E*{\, \frac{N_\ell}{N_0 4^\ell} \given Y } &\ =\
  \sum_{\ell'=\ell}^{2 \ell} \prob{N_\ell = N_0 2^{\ell'} \given Y}
  2^{\ell'-2\ell} \ \\
  &\leq\ \revise{2^{\ell^*+3-2\ell}} + \sum_{\ell'=\ell^*+3}^{2 \ell}
  \prob{\widehat \ell = \ell' \given Y} 2^{\ell'-2
    \ell}.\\
  &\leq \revise{2^{\ell^*+3-2\ell}} + \Order*{2^{\ell^*-2\ell}}
  \sum_{\ell'=\ell^*+3}^{2\ell} {2^{-q\p{\ell'-\ell^*}^2/8 + \ell'
      -\ell^*}}, \endaligned\] where we
substituted~\eqref{eq:too-many-samples-prob} in the last step. Since
the sum in the previous equation is bounded for any $q$, we have,
using the bound on $\ell^*$
in~\eqref{eq:ellstar-cond}, the fact that
$\ell \leq \ell^*  \leq 2\ell$ and the definition of $\ndelta$
in~\eqref{eq:z-def}, that
\[\aligned
  \E*{\, \frac{N_\ell}{N_0 4^\ell} \given Y }& \ =\ \Order*{2^{\ell^*-2
      \ell}} \\
   &\ =\ \Order{\min\p*{1, \max\p*{2^{-\ell}, {\ndelta^{-r}}}}},
  \endaligned
\]
for all values of $\ndelta$, and a similar calculation
to~\eqref{eq:work-bound} gives the desired bound on the overall
expected number of inner samples.

\subparagraph{Bounding the variance} When bounding the variance, we
want to control the probability of $\widehat \ell$ being significantly
smaller than $\ell^*$, i.e., the \revise{probability of returning a
  significantly smaller number of inner samples than we would have
  returned had we used $d$ and $\sig$ instead of their approximations,
  $\hd$ and $\hsig$, respectively}.  In particular,
$\widehat \ell < \ell^*$ implies that $\hd$ over-estimates $d$ and/or
$\hsig$ under-estimates $\sig$. We first deal with the latter case by
denoting
$G_\ell \eqdef \heaviside{\inner_{N_\ell}(Y)} - \heaviside{\E{X
    \given Y}}$ and writing the variance as
  \begin{equation}\label{eq:var-decompose}
  \aligned \var*{G_\ell} \leq \E*{\E{G_\ell^2 \given Y}} &=
  \E*{\;\E{G_\ell^2 \given Y,\; b^2
      \sig^2 > \hsig^2} \;\prob{b^2 \sig^2 > \hsig^2 \given Y}\;} \\
  &\hskip 1cm + \E*{\;\E{G_\ell^2 \;\ind{b^2  \sig^2 \leq \hsig^2} \given
      Y}\;
  }.
  \endaligned
\end{equation}
for some constant $0<b<1$, independent of $\ell, \ell^*$ and
$\widehat \ell$.  The first term in~\eqref{eq:var-decompose} deals with the
case when $\hsig$ under-estimates $\sigma$ while the second term deals
with the opposite case. Using \cref{thm:variance-err-bound}, we have, for any $Y$,
\[
  \begin{aligned}
    \prob*{b^2 \sig^2 > \hsig^2 \given Y}
    \leq \prob*{\abs{\sig^2- \hsig^2} > \p*{1-b^2}\sig^2  \given Y}
    \leq \Order{2^{-\pq \ell/4}}.
  \end{aligned}
\]
On the other hand, since the inner samples used to compute $G_\ell$
are independent from those used to compute $N_\ell$, we can bound, for
any $Y$,
\[\aligned
  {\E{G_\ell^2 \given Y,\; b \sig^2 > \hsig^2}}
  &= \E{\E{G_\ell^2 \given N_\ell} \given Y, \; b \sig^2 > \hsig^2} \\
  &= \E{\prob{\abs{\inner_{N_\ell}(Y) - \E{X \given Y}}
      \geq d \given N_\ell}\given Y, \; b \sig^2 > \hsig^2} \\
  &\leq \E{\min\p*{1, \delta^{-2} N_\ell^{-1}}\given Y, \; b \sig^2 > \hsig^2} \\
  & \leq {\min\p*{1, \delta^{-2} N_0^{-1} 2^{-\ell}}}.
  \endaligned
\]
Here, we used~\eqref{eq:chebyshev-prob-bound}, then the fact that
$N_\ell \geq N_0 2^{\ell}$. Taking expectation with respect to $Y$ and
using~\eqref{eq:int-delta-identity} and~\eqref{eq:int-min-identity} we
have that
\[
  \E*{\E{G_\ell^2 \given Y,\; b \sig^2 > \hsig^2}} \leq 2 \rho_0 N_0^{-1/2} 2^{-\ell/2} + \delta_0^{-2} N_0^{-1}
  2^{-\ell}.
\]
Hence, we have that the first term
in~\eqref{eq:var-decompose} is
$\Order{2^{-\ell/2 - q\ell/4}} = \Order{2^{-\ell}}$, since $q \geq 2$.

Now, we turn our attention to the second term
in~\eqref{eq:var-decompose} where the estimator $\hsig$ does not
significantly under-estimate $\sig$, i.e., given
$b^2 \sigma^2 \leq \hsig^2$. First, for some
$\vq \in ({2}/\p{2-r}, q)$, we bound by~\eqref{eq:tail-probability-bound}
\[
  \begin{aligned}
    {\;\E{G_\ell^2 \;\ind{b^2 \sig^2 \leq \hsig^2} \given Y}} &=
    \E*{\;{\E{G_\ell^2 \given Y, \hd, \hsig^2} \;
        \ind{b^2 \sig^2 \leq \hsig^2} }\given Y}\\
    &\leq \E*{\;{\prob*{\abs*{\, \inner_{N_\ell}(Y) \!-\! \E{X \given Y}
            \,} \!>\!  d \given Y, \hd, \hsig^2} \;
        \ind{b^2 \sig^2 \leq \hsig^2} }\given Y}\\
    &\leq \E*{\;{\min\p*{1, \ C_\vq\, \kappa_\vq \p*{\delta
            N_\ell^{{1}/{2}}}^{-\vq}} \;
        \ind{b^2 \sig^2 \leq \hsig^2} }\given Y}\\
    &\leq \E*{\;{ \min\p*{1, \ C_\vq\, \kappa_\vq \p*{\delta
            N_\ell^{{1}/{2}}}^{-\vq} \ind{b^2
            \sig^2 \leq \hsig^2}} \;}\given Y} \\
    &\leq {{\min\p*{1, \ C_\vq\, \kappa_\vq
          C^{-\vq}\ndelta^{-\vq}\E*{\p*{\frac{N_\ell}{N_0
                4^\ell}}^{{-\vq}/{2}} \ind{b^2 \sig^2 \leq \hsig^2}
            \;\given Y}} }}.
    \end{aligned}
  \] Here we used that fact that the inner samples used to compute
  $G_\ell$ are independent from those used to compute $\hd$ and
  $\hsig$.  We will proceed by bounding the probability of $N_\ell$
  being too small, then using this probability to bound the
  conditional expectations
  $\E*{\p*{N_\ell / \p*{N_0 4^{\ell}}}^{-\fpow{\vq}{2}} \ind{b^2
      \sig^2 \leq \hsig^2} \given Y}$. Finally, we conclude by
  following the same steps as in the proof of
  \cref{thm:perfect-adaptivity}.  To start, the question we will
  address is: what is the probability that the adaptive algorithm
  terminates on a level $\ell'$ given that $\ell' \leq \ell^* - 3$ and
  $b^2 \sigma^2 \leq \hsig^2$?  \revise{I.e., the probability of
    returning too few inner samples; where ``too few'' here means that
    the adaptive algorithm returned a fraction of at most $1/2^3$
    samples of what it would have returned had we used $d$ and $\sig$
    instead of their approximations, $\hd$ and $\hsig$,
    respectively. The choice $\ell^*-3$ is somewhat arbitrary;
    $\ell^*-2$ is the maximum to ensure the positivity of certain
    terms in the proof but this particular choice simplifies the
    subsequent algebra.
  }In any case, $\ell' \leq \ell^* - 3$, the termination condition in
  \cref{alg:adaptive} and the left inequality
  in~\eqref{eq:ellstar-cond} imply that
\[
\p*{C^{-1}N_0^{1/2} 2^{\ell} \hdelta }^{\!\!-r}
\!\leq\, \frac{2^{\ell'}}{4^\ell}
\,\leq\, \frac{1}{4} \cdot \, \frac{2^{(\ell^*-1)}}{4^\ell}
\,<\, \frac{1}{4}  \p*{C^{-1}N_0^{1/2} 2^{\ell} \delta }^{\!\!-r}
~~~\Longrightarrow ~~~
{\hdelta} > 4^{1/r} {\delta},
\]
for $\ell \leq \ell' \leq \ell^*-3$ and
  $\ell+3 \leq \ell^* \leq 2\ell$.  Hence, choosing $b$ such that
$4^{-1/r} < b < 1$, we have
\def\curcond{Y}
\def\givcurcond{\given\curcond}
\def\curand{,\;b^2 \sigma^2 \leq\hsig^2}
\[
  \begin{aligned}
    &\prob{\widehat \ell = \ell' \curand \givcurcond}\, \leq\,
    \prob*{\frac{\hd}{d}\cdot\frac{\sig}{\hsig} >
      4^{1/r} \curand\givcurcond} \,\\
    &\hskip 0.8cm \leq \prob*{\hd > 4^{1/r} b d \curand\givcurcond} \\
    &\hskip 0.8cm \leq \prob*{ \hd - d > \p*{1-4^{-1/r}b^{-1}}\hd \curand\givcurcond}\\
    &\hskip 0.8cm \leq\, \prob*{\, \abs{\hd - d} > \p*{b-4^{-1/r}} C
      \sig N_0^{-{1}/{2}} \, 2^{-\ell}
      2^{(2\ell-\ell')/r}\curand\givcurcond} \\
    &\hskip 0.8cm \leq\, \prob*{\, \abs*{ {\inner_{N_0 2^{\ell'}}(Y)} -
        \E{X \given Y} } > \p*{b-4^{-1/r}} C \sig N_0^{-{1}/{2}} \,
      2^{-\ell}
      2^{(2\ell-\ell')/r}\givcurcond} \\
    &\hskip 0.8cm \leq \p*{b-4^{-1/r}}^{-\pq} C_\pq \kappa_\pq C^{-\pq}
    2^{-\pq(2\ell-\ell') \p*{{1}/{r} - {1}/{2}}}.
  \end{aligned}
  \]
  Moreover, $\prob{\widehat \ell = \ell' \givcurcond} = 0$
    for $\ell \leq \ell' \leq \ell^*-3$ whenever
    $\ell \leq \ell^* < \ell+3$. Hence
    \begin{equation}\label{eq:too-few-samples-prob}
      \prob{\widehat \ell = \ell'
      \curand\givcurcond}\, = \, \Order{
      2^{-\pq(2\ell-\ell') \p*{{1}/{r} - {1}/{2}}}},
  \end{equation}
  for $\ell \leq \ell' \leq \ell^*-3$ and
    $\ell \leq \ell^* \leq 2\ell$.
  Now, we have that
\[
  \begin{aligned}
    & \E*{\p*{\frac{N_\ell}{N_0 4^\ell}}^{\!\!\!-\vq/2} \ind{b^2 \sig^2
        \leq \hsig^2} \givcurcond } \\
    &\hskip 2cm =\ \sum_{\ell'=\ell}^{2\ell} \prob{N_\ell = N_0
      2^{\ell'}\curand \givcurcond} \;
    2^{\vq(2\ell-\ell')/2} \\
    & \hskip 2cm \leq\ 2^{\vq(2\ell-\ell^*+2)/2} +
    \sum_{\ell'=\ell}^{\ell^*-3} \prob{\widehat \ell = \ell'\curand
      \givcurcond}\; 2^{\vq(2\ell-\ell')/2}.
  \end{aligned}
\]
Substituting~\eqref{eq:too-few-samples-prob} and denoting
$u \eqdef {{\pq} \p{r-2}/{r} + \vq}$ yields
\[\aligned
  \E*{ \p*{\frac{N_\ell}{N_0 4^\ell}}^{\!\!\!-\vq/2}\ind{b^2
        \sig^2 \leq \hsig^2} \givcurcond } \ &\leq\
  2^{\vq(2\ell-\ell^*+2)/2} + \sum_{\ell'=\ell}^{\ell^*-3}
  \Order{2^{u(2\ell-\ell')/2}} \\
  &\leq\ 2^{\vq(2\ell-\ell^*+2)/2} + \Order*{\frac{2^{u
        (2\ell-\ell^*+2)/2} - 2^{u\ell/2}}{2^{-u/2}-1}}. \endaligned
\]
Note that $\p{2-r}^2\pq = 2r$ has roots $r = 2 - (\pm\sqrt{4q+1}-1)/q$
and hence from~\eqref{eq:r_adaptive_bound} we have that
$q > 2r / \p{2-r}^2$. Therefore, we can choose $p$ to satisfy
${2}/\p{2-r} < \vq < {\pq} \p{2 - r}/{r}$ in which case we
have $u< 0 < \vq$ and the previous conditional expectation is
$\Order{2^{\vq(2\ell-\ell^*)/2}}$ whenever
$\ell \leq \ell^*  \leq  2\ell$. Using the bound on $\ell^*$
in~\eqref{eq:ellstar-cond}, the fact that
$\ell \leq \ell^* \leq 2\ell$ and the definition of $\ndelta$
in~\eqref{eq:z-def}, yields
\[\aligned \E*{\, \p*{\frac{N_\ell}{N_0 4^\ell}}^{\!\!\!-\vq/2}
    \ind{b^2 \sig^2
        \leq \hsig^2} \givcurcond }
  &= \Order{2^{\vq(2\ell-\ell^*)/2}} \\
  &= \Order*{\max\p*{1, \min\p*{2^{\ell \vq/2}, {\ndelta^{r\vq/2}}}}}.
  \endaligned \]
for all values of $\ndelta$.
The condition $p > 2 / (2-r)$ implies that $r < 2 - 2/p$ and hence a
similar calculation to~\eqref{eq:var-bound} yields that the second
term in~\eqref{eq:var-decompose} is again $\Order{2^{-\ell}}$. Hence
$\var{G_\ell} = \Order{2^{-\ell}}$.

\end{proof}
}
\section{A Model Problem}\label{s:model-example}
In this section, we will look at a simple example that mimics many of
the challenges of computing the probability of a large loss from a
financial portfolio. In fact, the underlying model problem can be seen
as the loss from a single option with a stock following a Brownian
Motion and a final payoff function, $f(x) = -x^2$ evaluated at
maturity, $T=1$. This is a model for a delta-hedged portfolio with
negative Gamma such that a large loss is incurred with very low
probability under extreme circumstances.

  \paragraph{Problem setup}
  Defining, for $\tau \ll 1$,
  \[
    \begin{aligned}
      P(y, z) &\eqdef -\p*{\tau^{1/2} \;y + \p{1\!-\!\tau}^{1/2}\; z}^2\\
      &= -\tau y^2 - 2 \tau^{1/2} \p{1\!-\!\tau}^{1/2} y z -
      \p{1\!-\!\tau} z^2,
    \end{aligned}
  \]
  we will compute the following
    \begin{equation}\label{eq:eta-model}
    \eta = \E*{\heaviside{\vphantom{\Big()}
        \E*{P(\widetilde Y,Z)} - \E*{P(Y,Z) \given Y} - L_{\eta}}},
  \end{equation}
  where $\widetilde Y, Y$ and $Z$ are independent standard normal
  random variables and $L_\eta$ is a constant. In financial
  applications, $\eta$ corresponds to the probability the portfolio
  loss exceeding a given loss level, $L_\eta$, over a short risk
  horizon, $\tau$, with the portfolio loss defined as the difference
  between current and future risk-neutral portfolio expectations.
    Alternatively, we will later specify $\eta \ll 1$ and determine
    the corresponding loss level, $L_\eta$ using the
    relation~\eqref{eq:eta-model}.
    Note that only the second inner expectation
    in~\eqref{eq:eta-model} is conditioned on samples of the outer
    random variable, $Y$. When approximating the inner expectations
    for a given $Y$, we could use the exact value of
    $\E{P(\widetilde Y,Z)}=-1$ and set $X \eqdef -1-P(Y,Z) - L_\eta$.
    However, the variance is
    $\var{X\given Y} = 4\tau(1-\tau) Y^2 + 2 \p{1-\tau}^2 =
    \Order{1}$.  We could also use independent samples of $Z$ to
    compute both $\E*{P(\widetilde Y,Z)}$ and $\E*{P(Y,Z) \given Y}$,
    setting $X \eqdef P(\widetilde Y,\widetilde Z)-P(Y,Z)-L_\eta$ for
    independent standard normal variables $Z$, $\widetilde Z$ and
    $\widetilde Y$. The variance would then be
    $\var{X \given Y} = 2 \p{1-\tau}^2 + 4\tau(1-\tau) Y^2 + 2 =
    \Order{1}$.

  Instead, we use the same samples of $Z$ when estimating both inner
  expectations. Moreover, for increased variance reduction,
    we also use an antithetic control variate based on the fact that
    $\widetilde Y$ is identically distributed to $-\widetilde Y$.  In
  summary, we set, for a given $Y$,
    \begin{equation} \label{eq:X-model} \aligned
    X &\eqdef   \frac{1}{2}\p*{P(\widetilde Y, Z) + P(-\widetilde Y, Z)}-P(Y, Z) - L_{\eta} \\
    &= \tau(Y^2 - \widetilde Y^2) + 2 \tau^{1/2}\p{1-\tau}^{1/2}\; Y Z
    - L_{\eta}.\endaligned
  \end{equation}
  Here, again, $\widetilde Y$ and $Z$ are independent standard normal
  random variables. The variance in this case is reduced to
  \[
    \sigma^2 = \var{X \given Y} = 2 \tau^2 + 4\tau(1-\tau) Y^2 =
    \Order{\tau}.
  \]
  and we can also compute analytically
  \[
    d = \abs*{\E{X \given Y}} = \abs*{\tau\p*{Y^2-1} - L_\eta}.
  \]
  \revise{The cumulative distribution function (CDF) of the random
    variable, ${\E{X \given Y}}$, is
    \begin{equation}\label{eq:model-problem-X-CDF}
    \begin{aligned}
      \prob{\E{X \given Y} \leq x} &= \prob*{ \E*{P(\widetilde Y,Z)} -
        \E*{P(Y,Z) \given Y} -
        L_{\eta} \leq x} \\
      &= \prob*{\abs{Y} \leq \p*{\frac{\tau + x +
            L_\eta}{\tau}}^{\fpow{1}{2}}} \\
      &= 1-2 \; \Phi\p*{-\p*{1 + \frac{x + L_\eta}{\tau} }^{1/2}},
    \end{aligned}
  \end{equation}
  where $\Phi$ is the standard normal CDF
  and where we substituted $\E{P(\widetilde Y,Z)} = -1$ and
  $\E{P(Y,Z) \given Y} = -\tau Y^2 - (1-\tau)$. In particular,
  \[
    \eta = \prob{\E{X\given Y} \geq 0} = 2 \; \Phi\p*{-\p*{1 +
        \frac{L_\eta}{\tau} }^{1/2}}.
  \]}
\revise{The interested reader may refer to the supplementary material
  for more motivation and discussion regarding this model
  problem. \cref{fig:model-setup-1}-(a) shows the CDF of
  ${\E{X \given Y}}$ and illustrates, as can be seen in the definition
  of the CDF in~\eqref{eq:model-problem-X-CDF}, its square-root
  behaviour in the neighbourhood of $x = -\tau - L_\eta$. This square root behaviour
  results in an inverse-square root singularity in the density of
  ${\E{X \given Y}}$ and also in $\rho$, the density of $\delta$,
  which is illustrated in \cref{fig:model-setup-1}-(b).}
\revise{
  \begin{remark}[A simple model problem]
    By relying on the one-dimensional random variable $\delta$ and its
    approximation $\hdelta$, our analysis in
    \cref{sec:nested-mc} includes cases in which $X$ is a
    function of many random variables and $Y$ is multi-dimensional,
    e.g., $X$ is the sum of losses at maturity from the options in a
    portfolio in excess of some threshold value and $Y$ is the value
    of the underlying stocks. On the other hand, our constructed
    numerical example has $Y$ being a simple one-dimensional normal
    random variable and $X$ being a simple function of $Y$ and hence
    easily sampled. This example is intentionally simple and is meant
    to showcase the advantage of combining adaptive sampling with MLMC
    without the extra complications that dealing with a large
    portfolio would entail, e.g., the cost of evaluating $X$ due to
    the large number of assets or the simulation cost of complicated
    assets.
  \end{remark}}

\def\vartau{0.02}
\def\varLc{0.080477723746297747}

\tikzset{
  declare function={
    normcdf(\x)=1/(1 + exp(-0.07056*((\x))^3 - 1.5976*(\x)));
    normpdf(\x)= exp(-\x^2/2)/ sqrt(2*pi);
    distance(\x) = \vartau * (\x^2 - 1) - \varLc;
    sigma(\x) = sqrt(2. * \vartau ^ 2 + 4 * \vartau * (1 - \vartau) * (\x^2));
    signeddelta(\x) = distance(\x) / sigma(\x);
    signeddeltaderiv(\x) = 2*\vartau*\x / sigma(\x) -
     4 * signeddelta(\x) * \x *\vartau * (1-\vartau) / (sigma(\x))^2;
  },
}

\begin{figure}[t]
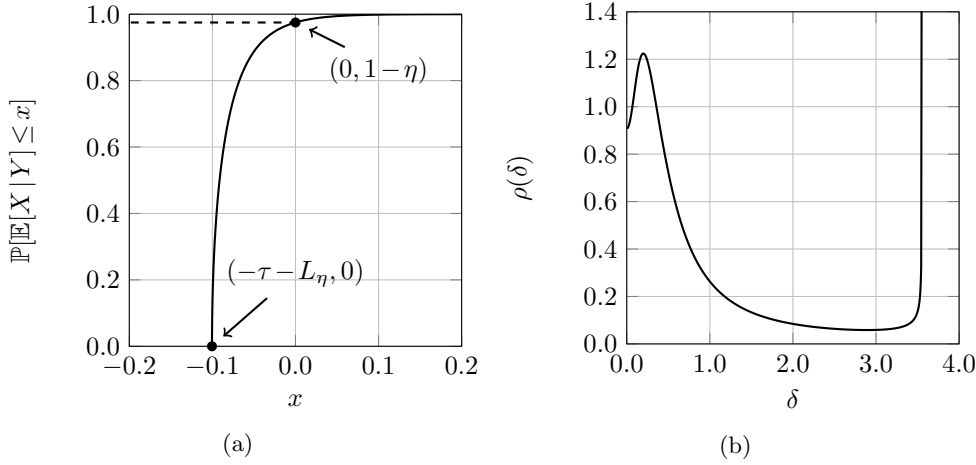

  \centering \subfig{model-problem/cdf}\hskip 0.3cm \subfig{model-problem/rho}
  \caption{\revise{(a) The cumulative distribution function of
      $\E{X \given Y}$ with $X$ as defined in~\eqref{eq:X-model} and
      $Y$ a standard normal variable. This figure illustrates the
      square-root behaviour in the neighbourhood of
      $x = -\tau-L_{\eta}$. (b) shows the density, $\rho$, of
      $\delta = {d}/{\sigma}$. This figure shows that the density is
      bounded near 0, hence \cref{ass:bounded-density} is
      satisfied, even though $\rho$ has an inverse-square root
      singularity near $\delta = 3.5$.} For both figures, we use
    $\tau=0.02$ and $L_\eta \approx 0.0805$ so that $\eta =
    0.025$. }\label{fig:model-setup-1}
\end{figure}
  \paragraph{Verifying the Assumptions}
  \revise{As \cref{fig:model-setup-1}-(b) shows, despite the
    inverse square-root singularity near $3.5$, the density $\rho$ is
    bounded near 0}, hence \cref{ass:bounded-density} is
  verified. On the other hand, to verify
  \cref{ass:bounded-moments}, we bound
    \[
    \begin{aligned}
      \kappa_q &= \sup_y \en4\{\}{\E*{\frac{\abs{X - \E{X \given Y}}^q}{\sigma^{q}}  \given Y=y}} \\
      &= \sup_{y} \frac{\E*{ \abs*{2 \tau^{1/2} \p{1\!-\!\tau}^{1/2}\;
            Z y - \tau ({\widetilde Y}^2-1)}^q}}{\p*{2
          \tau^2 + 4\tau(1-\tau) y^2}^{q/2}} \\
      &\leq \sup_{y} \frac{2^{q-1} 2^{q} \tau^{q/2}
        \p{1\!-\!\tau}^{q/2}y^q\;\E*{\abs*{ Z }^q} +
        2^{q-1}\tau^q\E*{\abs*{{\widetilde Y}^2-1}^q}}{\p*{
          4\tau(1-\tau) y^2 + 2
          \tau^2}^{q/2} } \\
      &\leq 2^{q-1}\E{\abs{Z}^q} + {2^{q/2-1}\E*{\abs*{{\widetilde
              Y}^2-1}^q}} %
      \\ &< \infty,
    \end{aligned}
  \]

  for any $q>0$.

  \paragraph{Antithetic estimators}
  The MLMC estimator in~\eqref{eq:mlmc-outer-est} uses independent
  samples of $X$, conditioned on the same value of $Y$, for the fine
  and coarse estimators of the conditional inner expectation on each
  MLMC level. We can reduce the variance of the difference by a
  constant factor by instead using an antithetic
  estimator~\cite{giles:antithetic, giles:acta}. The antithetic
  estimator uses the same set of independent samples to compute both
  the coarse and fine approximations. For a deterministic number of
  inner samples, $N_\ell = N_0 2^\ell$, we observe a variance
  reduction factor of approximately 3.5 compared to using separate
  independent samples for the fine and coarse estimators.

  To present the antithetic estimator when using an adaptive
    number of inner samples, as returned by
    \cref{alg:adaptive},
    \revise{first note that since
      \cref{alg:adaptive} returns $N_\ell$ that is a multiple
      of $N_{\ell-1}$ or vice versa, then both $\maxNell / N_\ell$ and
      $\maxNell /N_{\ell-1}$ are integers}. The antithetic estimator
    can then be written as
    \begin{equation}\label{eq:mlmc-outer-est-antithetic}
    \widehat{\eta} \eqdef
    \sum_{\ell=0}^L \frac{1}{M_\ell} \sum_{m=1}^{M_\ell} \MCest_{\ell,
      N_\ell}^{(\ell, m)}(y^{(\ell, m)}) - \MCest_{\ell,
      N_{\ell-1}}^{(\ell, m)}(y^{(\ell, m)}),
  \end{equation}
  where, the $\maxNell$ samples for the level $\ell$ correction are
  split into subsets of size $N_\ell$ and $N_{\ell-1}$ to give the
  estimator
  \[
    \MCest_{\ell, N}^{(\ell, m)} (y) \eqdef \frac{N}{\maxNell}
    \sum_{i=1}^{\maxNell/N} \heaviside{\inner^{(\ell, m)}_{N, i}(y)},
  \]
  with $\MCest_{0, N_{-1}}(\cdot) \eqdef 0$ and, similar
  to~\eqref{eq:mlmc-inner-level}, we define
  \begin{equation}
    \label{eq:MCest-antithetic}
    \inner^{\p{\ell, m}}_{N, i}(y) \eqdef \frac{1}{N} \sum_{n=1}^{N}
    x^{(\ell, m, (i-1) N + n)}(y).
  \end{equation}
  Here ${\{x^{(\ell, m, n)}(y)\}}_{n=1}^{\maxNell}$ are i.i.d.\
  samples of $X$ conditioned on $Y=y$. Moreover, these samples are
  independent from $x^{(\tilde \ell, \tilde m, n)}$ for
  $\p{\ell, m} \neq \p{\tilde \ell, \tilde m}$ and any $n$.

  \paragraph{Starting MLMC level}
  The MLMC estimator in~\eqref{eq:mlmc-outer-est-antithetic} includes
  the levels $0, 1,\ldots , L$. However, this might be sub-optimal in
  some cases. To see this, denote
  \[
    G_\ell(y) = \MCest_{\ell, N_\ell}(y) - \MCest_{\ell,
      N_{\ell-1}}(y),
  \]
  and let $V_\ell = \var{G_\ell(Y)}$ and
  $\varf_\ell = \var{\MCest_{\ell, N_\ell}(Y)}$. Moreover, let
  $W_\ell = \E{N_\ell}$ denote the expected number of inner samples,
  which, in our case, is the average work required to compute a sample
  of $G_\ell$. Recall that, to approximate a quantity of interest with
  RMS error $\tol$, the total work of MLMC is
  approximately~\cite{giles:acta}
  \[
    \tol^{-2} \p*{ \sqrt{\varf_{\ell_0} W_{\ell_0}} +
      \sum_{\ell=\ell_0+1}^L \sqrt{V_\ell W_\ell}}^2,
  \]
  for $0 \leq \ell_0 < L$. Here, we are assuming that we only include
  the levels $\ell_0, \ldots, L$ in the MLMC estimator.  Then
  including the level $\ell_0$ in the MLMC estimator is optimal if
    \begin{equation}\label{eq:mlmc-first-lvl-check}
    R_{\ell_0} \eqdef \frac{\sqrt{\varf_{\ell_0} W_{\ell_0}} +
      \sqrt{V_{\ell_0+1} W_{\ell_0+1}}}{\sqrt{\varf_{\ell_0+1}
        W_{\ell_0+1}}} \leq 1.
  \end{equation}
  Otherwise, discarding that level and starting from $\ell_0+1$ would
  yield less computational work. Note that if $\varf_{\ell} = \varf$
  and $W_{\ell} \propto 2^{\gamma\ell}$
  for all $\ell$, then the previous inequality simplifies to
  \[
    V_{\ell_0+1} %
    \leq {\p*{1-{2^{-\gamma/2}}}^2}\varf.
  \]
  For example, for $\gamma=1$, the variance of the first
  level-difference, $V_{\ell_0+1}$, to be included in the MLMC
  estimator should be more than 11 times smaller than the variance of
  the quantity of interest.

  To deal with this issue, our MLMC algorithm starts from some level,
  $\ell_0$, and approximates the variance estimates and average work
  for levels $\ell_0$ and $\ell_0+1$ using some small number of outer
  samples. Then if~\eqref{eq:mlmc-first-lvl-check} is not satisfied,
  level $\ell_0$ is discarded and the MLMC algorithm is restarted with
  the first level being $\ell_0+1$. This process is repeated
  until~\eqref{eq:mlmc-first-lvl-check} is satisfied.

  \paragraph{Results}
  We apply MLMC with deterministic and adaptive sampling.  The
  deterministic sampling algorithm is run with either
  $N_\ell = N_0 2^\ell$ or $N_\ell = N_0 4^\ell$ for $N_0=32$. On the
  other hand, the adaptive sampling algorithm is run with different
  values of $r=1.25, 1.5$ and $1.75$, the same value of $N_0$ and the
  confidence constant $C=3$. On every iteration of the adaptive
  algorithm, we use the same estimator for $d$ and $\sig$ as defined
  in~\eqref{eq:hd-estimator} and~\eqref{eq:hsig-estimator},
  respectively.  Our theory on the adaptive sampling method requires
  bounded $q$ normalised moments, $\kappa_q$, for
  $q > {2r}/\p{2-r}^2$, recall~\eqref{eq:r_adaptive_bound} in
  \cref{thm:adaptive-hd}. In our tests, we use $r=1.25, 1.5$
  and $1.75$ which requires $q > 4.45, 12$ and $56$, respectively,
  but, recall that for our model problem, $\kappa_q$ is bounded for
  all $q \geq 2$. As with \cref{fig:model-setup-1}, we set
  $\tau \eqdef 0.02$ and $L_\eta \approx 0.0805 $ so that our goal is
  to estimate $\eta = 0.025$.

  \cref{fig:model-results}-(a) shows the average number of
  used inner samples per level for the different methods. For the
  adaptive algorithm, the average number of inner samples is around 10
  times larger than $N_0 2^{\ell}$, which is used in the deterministic
  algorithm, but grows at the same rate with respect to $\ell$, as
  proved in \cref{thm:adaptive-hd}.

  On the other hand, \cref{fig:model-results}-(b) plots
  $V_\ell \eqdef \var{G_\ell}$ and
  $\varf_\ell \eqdef {\var{\MCest_{\ell, N_\ell}(Y)}}$ versus
  $\ell$. This figure shows that the variance of the MLMC levels with
  adaptive sampling is the same as the variance when using
  deterministic sampling with $N_\ell = N_0 4^\ell$, i.e., the
  variance converges like $\Order{2^{-\ell}}$ in both cases as proved
  in \cref{thm:adaptive-hd}, even though MLMC with adaptive
  sampling uses fewer inner samples per level on average. The variance
  convergence rate of the deterministic algorithm with
  $N_\ell = N_0 2^\ell$ is shown to be $\Order{2^{-\ell/2}}$, as
  proved in \cref{thm:inner-mc-var}. Note also that the
  variance of the quantity of interest, $\varf_\ell$, decreases
  slightly as $\ell$ increases but converges to the same value for all
  methods for sufficiently large $\ell$.

  \cref{fig:model-results}-(c) plots
  $E_{\ell} \eqdef \abs{\E{G_{\ell}}}$ and
  $\Ef_\ell \eqdef \abs{\E{\MCest_{\ell, N_\ell}(Y)}}$ versus
  $\ell$. The plot paints the same relative picture as the variance
  plot \cref{fig:model-results}-(b). However, recall that the
  complexity of MLMC when using adaptive sampling, is
  $\Order{\tol^{-2}\logtol2}$, does~\emph{not} depend on the
  convergence rate of $E_\ell$ since the variance, $V_\ell$, converges
  at the same rate that the average number of inner samples,
  $\E{N_\ell}$ increases. For the deterministic algorithm, on the
  other hand, since $N_\ell = N_0 2^\ell$,
  $V_\ell = \Order{2^{-\ell/2}}$ and $E_\ell = \Order{2^{-\ell}}$, the
  complexity of MLMC is $\Order{\tol^{-5/2}}$. When using
  $N_\ell = N_0 4^\ell$, MLMC has the same complexity.

  \cref{fig:model-results}-(d) plots the ratio $R_\ell$, as
  defined in~\eqref{eq:mlmc-first-lvl-check}, versus $\ell$. This
  figure shows the first level, $\ell_0$, that should be included in
  the MLMC estimator for the different methods, namely the first level
  for which $R_{\ell_0} < 1$. Hence, for the adaptive method,
  $\ell_0 = 4$ is optimal, while for the deterministic algorithm
  $\ell_0=2$ when $N_\ell = N_0 4^{\ell}$ and $\ell_0=7$ when
  $N_\ell = N_0 2^{\ell}$ are optimal.

  Finally, \cref{fig:model-results-total-work}-(a) shows the
  total number of used inner samples for the different methods for
  multiple tolerances. This number corresponds to the total work of
  each method and includes both the samples that are used to compute
  the MLMC estimator and the samples that were used in the adaptive
  algorithm (as detailed in \cref{alg:adaptive}). This
  figure, and \cref{fig:model-results-total-work}-(b) which
  shows the total running time, verify that the actual work follows
  the predicted work complexities for each of the considered
  methods. The running times of the simulations were obtained
    using a \texttt{C++} implementation of the adaptive algorithm and
    the samplers of $Y$ and $X$. Moreover, CUDA was used to
    parallelise the computation of the inner and outer samples on a
    Tesla P100 GPU with 3584 cores.

    \pgfplotstabletranspose[input colnames to={x},colnames
    from={x}]{\loadedtable}{imgs/model-heaviside/data_notdyn.dat}

\begin{figure}
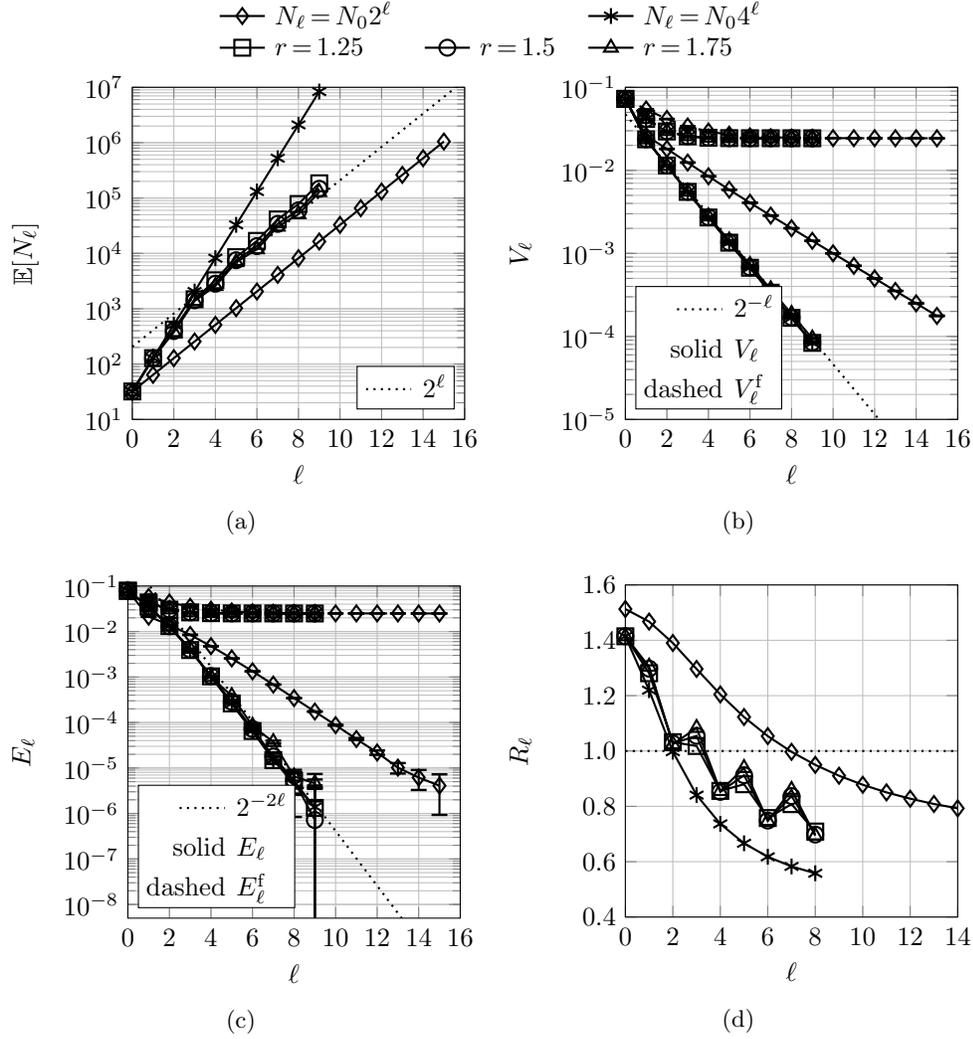

  \centering
  \begin{tabular}{lcl}
    \ref{heaviside_model_gpu/total_work-line0} {$N_\ell = N_0 2^\ell$}
    & {}
    &\ref{heaviside_model_gpu/total_work-line4} {$N_\ell = N_0
      4^\ell$}\\
    \ref{heaviside_model_gpu/total_work-line1} {$r=1.25$} & \ref{heaviside_model_gpu/total_work-line2} {$r=1.5$} &
                                                                                                               \ref{heaviside_model_gpu/total_work-line3}
                                                                                                               {$r=1.75$}
  \end{tabular}

    \subfig{model-heaviside/work}\hskip 0.3cm \subfig{model-heaviside/var}
    \subfig{model-heaviside/err}\hskip 0.3cm \subfig{model-heaviside/work-contrib}
    \caption{ (a) average number of samples, (b) variance and (c)
      absolute error per level for the MLMC estimator of
      $\E{\heaviside{\E{X \given Y}}}$ for the model problem
      described in \cref{s:model-example} using deterministic
      and adaptive sampling with different values of $r$. Note that
      the average number of samples in the adaptive method increases
      like $\Order{2^\ell}$ while the variance decreases like
      $\Order{2^{-\ell}}$.  }\label{fig:model-results}
\end{figure}

\pgfplotstabletranspose[input colnames to={x},colnames
from={x}]{\loadedtable}{imgs/model-heaviside/data.dat}

\begin{figure}
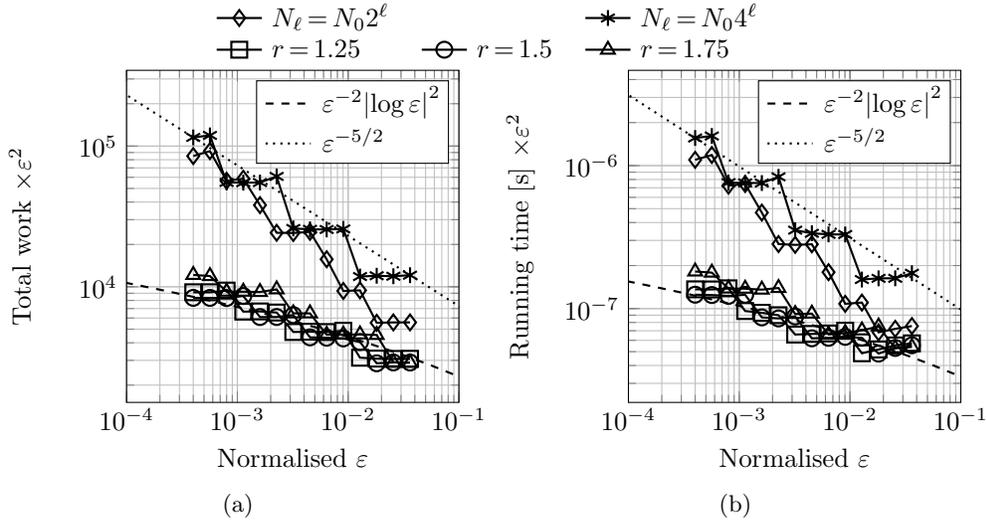

  \centering
    \begin{tabular}{lcl}
    \ref{heaviside_model_gpu/total_work-line0} {$N_\ell = N_0 2^\ell$}
    & {}
    &\ref{heaviside_model_gpu/total_work-line4} {$N_\ell = N_0
      4^\ell$}\\
    \ref{heaviside_model_gpu/total_work-line1} {$r=1.25$} & \ref{heaviside_model_gpu/total_work-line2} {$r=1.5$} &
                                                                                                               \ref{heaviside_model_gpu/total_work-line3}
                                                                                                               {$r=1.75$}
  \end{tabular}

  \subfig{model-heaviside/total_work}\hskip 0.3cm \subfig{model-heaviside/total_time}
  \caption{The (a) total work and (b) total running time for the MLMC
    estimator of $\E{\heaviside{\E{X \given Y}}}$ versus
    multiple error tolerances, normalised by the exact value,
      i.e., $\eta$, with deterministic and adaptive sampling. We also
    plot the expected work complexities in each case, as predicted in
    theory.}\label{fig:model-results-total-work}
\end{figure}

\section{Beyond Probabilities}\label{s:beyond-prob}
Using the adaptive method that we developed in the previous section
and denoting the random variable $L \eqdef {\E{X \given Y}}$, for
\[ X \eqdef \frac{1}{2}\p*{P(\widetilde Y, Z) + P(-\widetilde Y,
    Z)}-P(Y, Z), \] we can estimate
\[
  \eta = \prob{L > L_\eta} = \E{\heaviside{L - L_\eta}},
\]
for a given $L_\eta$ up to an error tolerance $\tol$ with a complexity
$\Order{\tol^{-2}\logtol2}$. Using these estimates, we can solve the
inverse problem to find $L_\eta$ for a given $\eta$, i.e., compute the
$(1 - \eta)$-quantile. In the context of financial applications,
$L_\eta$ is called the Value-at-Risk (VaR) of a financial
portfolio. Finding $L_\eta$ can be formulated as finding the root of
$\widetilde f(L_\eta) = \eta - \prob{L >L_\eta}$. To that end, we
can use a stochastic root finding algorithm such as the Stochastic
Approximation Method~\cite{robbins:SA, bardou:var-SA} and its
multilevel extensions~\cite{frikha:mlsa, dereich:gmlsa}. Instead,
since \revise{$X$ is one-dimensional} and since $\prob{L >L_\eta}$ is
monotonically decreasing with respect to $L_\eta$, we use in the
current work the simplified algorithm listed in
\cref{alg:root-finding}. The algorithm starts with an
estimate of $L_\eta$, denoted by $\hL_\eta$, then depending on where
${\heta \eqdef \prob{L >\hL_\eta}}$ lies with respect to $\eta$,
$\hL_\eta$ is adjusted. To account for the fact that $\heta$ can only
be estimated with a specified RMS error tolerance, whenever $\heta$ is
close to $\eta$, the RMS error tolerance is halved. We leave the
analysis of the root finding algorithm and comparison to other
algorithms in the literature to future work. Numerically, the
complexity of the algorithm seems to be close to
$\Order{\tol^{-2}\logtol2}$, see
\cref{fig:root-finding-complexity}.

\begin{algorithm}[t]
\caption{Stochastic root-finding algorithm.}
\label{alg:root-finding}
\KwData{$\eta$, $\eps$, $\lambda_0$, $L_0$, $h_0 > \tol/2$}
\KwResult{$\hL_\eta$ s.t.~$\abs{\hL_\eta- L_\eta} \leq \eps$}

$\hL_\eta := L_0$\;
$\lambda := \lambda_0$\;
Compute $\heta \approx \prob{L > \hL_\eta}$ with RMS error $\lambda$\;
$h := h_0\ \mbox{sign}(\heta - \eta)$\;

\vspace{0.1in}

\While{$2 \abs{h}>\eps$}{
  $\hL_\eta := \hL_\eta + h$\;
  Compute $\heta \approx \prob{L > \hL_\eta}$ with RMS error $\lambda$\;

  \If{$h\, {\rm sign}(\heta - \eta)<0$}{
    $h := - h/2$\;
  }
  \If{$\abs{\heta - \eta} < 3 \lambda$}{
    set $\lambda := \lambda/2$\;
  }
}
\Return{} $\hL_\eta$\;
\end{algorithm}

\begin{figure}
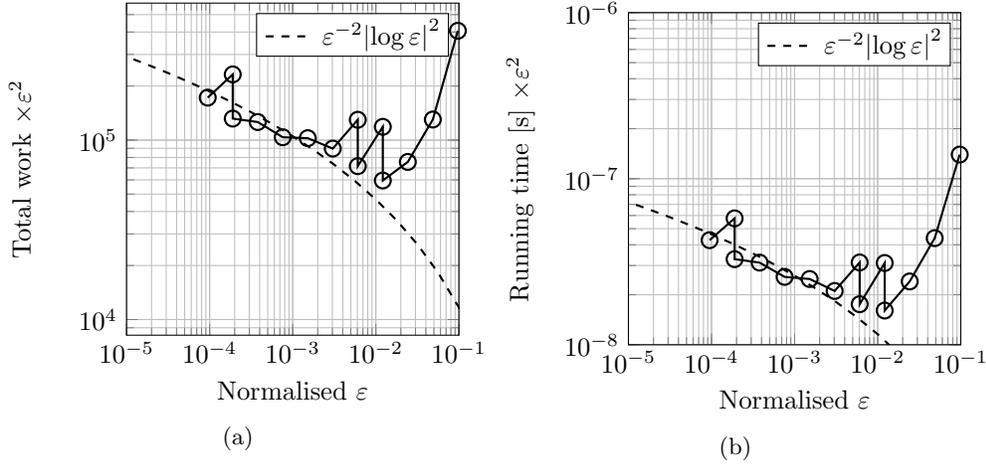

  \centering
    \subfig{root-finding/total-work}\hskip 0.3cm \subfig{root-finding/total-time}
    \caption{The complexity of \cref{alg:root-finding} to
      compute the $L_\eta$ that satisfies $\eta =\prob{L >L_\eta}$
      for a given $\eta$. In (a) the total work is the total number of
      generated samples of $X$. }\label{fig:root-finding-complexity}
\end{figure}

Another important quantity to compute is $ \E{ L \given L > L_\eta}$
for a given $\eta$. In the context of finance application, this
quantity is the Conditional Value-at-Risk (CVaR), also known as the
expected shortfall, which is the expected loss from a portfolio, given
that the loss exceeds the $(1-\eta)$-quantile for a given
$\eta$. In~\cite{rockafellar:cvaropt}, it was shown that by denoting
$f(x) \eqdef x + \frac{1}{\eta}\E{\max(L - x, 0)}$, CVaR can be written
as $\E{L \given  L > L_\eta} = f(L_\eta) = \inf_x f(x)$ since, for the
cumulative distribution function, $F(x) \eqdef \prob{L < x}$, and the
corresponding probability density function,
$\varrho \eqdef \fracs{\D F}{\D x}$, we have
\[\aligned
  \frac{\D f(x)}{\D x} &= 1 -
  \frac{1}{\eta}\E{\heaviside{L-x}} = 1-\frac{1-F\p{x}}{\eta},\\
  \frac{\D^2 f(x)}{\D x^2} &= \frac{\D}{\D
    F(x)}\p*{1-\frac{1-F(x)}{\eta}} \frac{\D F(x)}{\D x} =
  \frac{1}{\eta} \varrho(x) \geq 0, \endaligned
\]
and $\fracs{\D f(x)}{\D x} \big|_{x=L_\eta} = 0$.  Hence if $L_\eta$
is approximated with $\hL_\eta$, using, for example,
\cref{alg:root-finding}, and the CVaR value, $f(L_\eta)$, is
approximated with $f(\hL_\eta)$, then the error is
\[
  \abs{f(L_\eta) - f(\hL_\eta)} = \Order*{\p*{L_\eta - \hL_\eta}^2}.
\]
That is, an $\Order{\tol^{1/2}}$ error in the approximation of the
VaR, $L_\eta$, with \linebreak $\Order*{\tol^{-1}\:\logtol2}$ complexity yields
an $\Order{\tol}$ error in the approximation of the CVaR
$f(L_\eta) = {\E{L \given L > L_\eta}}$.

To approximate CVaR given $\hL_\eta$ by computing $f(\hL_\eta)$, we
still need to approximate the expectation
$\E*{\max\p*{\E{X \given Y} - \hL_\eta, 0}}$ where the outer
expectation is with respect to $Y$ while the inner conditional
expectation is with respect to $X$.
Without loss of generality, we can set $\hL_\eta \eqdef 0$ by defining
$X_{\text{new}} \eqdef X_{\text{old}} - \hL_\eta$. The resulting
problem, to compute $\E*{\max\p*{\E{X \given Y}, 0}}$,
is similar to~\eqref{eq:objective-nested} but with a maximum function
instead of a step function. Hence, we can again use the MLMC method,
as described \cref{s:mlmc-nested}, with an antithetic sampler,
as mentioned in \cref{rem:antithetic} and explained in
\cref{s:model-example}. Using the notation in
\cref{s:model-example}, we set
    \begin{equation}\label{eq:max-anti-estimator}
    \MCest_{\ell, N}^{(\ell, m)} (y) \eqdef \frac{N}{\maxNell}
    \sum_{i=1}^{\maxNell/N} \max\p*{\inner^{(\ell, m)}_{N, i}(y), 0}.
  \end{equation}
  Note that, for a given
  $y$, whenever the estimates $\inner^{(\ell,m)}_{N_\ell,
    i}(y)$ and $\inner^{(\ell, m)}_{N_{\ell-1},
    i}(y)$ are \revise{positive for all $i$, then (cf.~\eqref{eq:MCest-antithetic})
  \[
    \MCest_{\ell, N_{\ell}}^{(\ell, m)} (y) = \frac{1}{\maxNell}
    \sum_{n=1}^{\maxNell} x^{(\ell, m, n)}(y) = \MCest_{\ell,
      N_{\ell-1}}^{(\ell, m)} (y)
  \]
  and hence the difference $\MCest_{\ell, N_{\ell}}^{(\ell, m)} (y) -
  \MCest_{\ell, N_{\ell-1}}^{(\ell, m)}
  (y)$ is zero. Similarly, if $\inner^{(\ell,m)}_{N_\ell,
    i}(y)$ and $\inner^{(\ell, m)}_{N_{\ell-1},
    i}(y)$ are negative for all
  $i$, then the difference is trivially zero}. Using a deterministic
number of samples $N_\ell = N_0
2^{\ell}$ results in an error
$\Order{2^{-\ell/2}}$ in estimating the inner expectation, $\E{X
  \given
  Y}$. \revise{Therefore, the two estimates,
  $\inner^{(\ell,m)}_{N_\ell, i}(y)$ and $\inner^{(\ell,
    m)}_{N_{\ell-1}, i}(y)$,
might have different signs whenever the exact value}, ${\E{X \given
  Y}}$, is
$\Order{2^{-\ell/2}}$. Hence the variance of the MLMC difference is
$\Order{2^{-3
    \ell/2}}$ and its absolute expectation is
$\Order{2^{-\ell}}$. This is made more precise in the following
theorem.

\begin{theorem}\label{thm:max-conv-rates}
  Assume that \cref{ass:bounded-density}
  and~\ref{ass:bounded-moments} hold and assume further that
  $\E{\sigma^2} < \infty$ and that there exists
  $\sigma_0$ such that $\sigma \leq \sigma_0$ given $\delta <
  \delta_0$. Then denoting
  \[
    G_\ell \eqdef \MCest_{\ell, N_\ell} (Y) - \MCest_{\ell,
      N_{\ell-1}} (Y),
  \]
  where $\MCest_{\ell,
    N}$ is defined in~\eqref{eq:max-anti-estimator} and $N_\ell = N_0
  2^{\ell}$, we have
  \[
    \E{\abs{G_\ell}} = \Order{N_{\ell}^{-1}} \quad\text{and}\quad
    \E{G_\ell^2} = \Order{N_{\ell}^{{-\min\p{3, q}}/{2}} }.
  \]
\end{theorem}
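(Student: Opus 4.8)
The plan is to exploit the antithetic structure of the estimator in~\eqref{eq:max-anti-estimator}. With $N_\ell = N_0 2^\ell$ we have $\maxNell = N_\ell = 2N_{\ell-1}$, so the fine contribution $\MCest_{\ell,N_\ell}$ is a single term $\max\p{\inner_{N_\ell}(Y),0}$, while the coarse contribution $\MCest_{\ell,N_{\ell-1}}$ averages $\max\p{\cdot,0}$ over the two disjoint half-blocks, and $\inner_{N_\ell}(Y)$ is exactly the mean of the two half-block averages. Writing $\mu\eqdef\E{X\given Y}$ and letting $a,b$ be the two Monte Carlo errors of the half-block averages (independent given $Y$, each of variance $\sigma^2/N_{\ell-1}$), so that the fine error is $\p{a+b}/2$, I would first record that $G_\ell = \max\p{\mu+\tfrac{a+b}{2},0} - \tfrac12\p*{\max\p{\mu+a,0}+\max\p{\mu+b,0}}$. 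Because $x\mapsto\max(x,0)$ is convex and piecewise linear with its only kink at $0$, linearity on each branch makes $G_\ell$ vanish unless $\mu+a$ and $\mu+b$ have opposite signs; and on that sign-flip event the $1$-Lipschitz property gives $\abs{G_\ell}\le\tfrac12\abs{a-b}$. The key point is that a sign flip forces $\max\p{\abs{a},\abs{b}}\ge d$, i.e.\ the large-deviation event whose probability is controlled by~\eqref{eq:tail-probability-bound}.

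I would then condition on $Y$ and estimate the conditional moments. By \cref{lemma:mc-prob-bound} and \cref{ass:bounded-moments} the half-block errors satisfy $\E{\abs{a-b}^q\given Y}=\Order{\sigma^q N_\ell^{-q/2}}$, while~\eqref{eq:tail-probability-bound} bounds the sign-flip probability by $\Order{\min\p{1,\p{\delta N_\ell^{1/2}}^{-q}}}$. Splitting $\abs{a-b}$ (resp.\ $\abs{a-b}^2$) restricted to the sign-flip event by H\"older's inequality into a moment factor and a probability factor, I expect conditional bounds of the form
\[
  \E{\abs{G_\ell}\given Y}=\Order*{\frac{\sigma}{N_\ell^{1/2}}\min\p*{1,\p{\delta N_\ell^{1/2}}^{-(q-1)}}},\qquad
  \E{G_\ell^2\given Y}=\Order*{\frac{\sigma^2}{N_\ell}\min\p*{1,\p{\delta N_\ell^{1/2}}^{-(q-2)}}}.
\]

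The final step is to take the expectation over $Y$, and this integration is the main obstacle, since $\sigma$ is correlated with $\delta$ and need not be bounded; here the additional hypotheses $\E{\sigma^2}<\infty$ and $\sigma\le\sigma_0$ on $\{\delta<\delta_0\}$ enter. I would split the $Y$-average over $\{\delta<\delta_0\}$ and $\{\delta\ge\delta_0\}$: on the former I bound $\sigma\le\sigma_0$ and use the bounded density of $\delta$ (\cref{ass:bounded-density}) together with~\eqref{eq:int-delta-identity}--\eqref{eq:int-min-identity}; on the latter I bound the $\min$ by its decaying argument and invoke $\E{\sigma^2}<\infty$ (and $\E{\sigma}\le\E{\sigma^2}^{1/2}$ for the first moment). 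For $\E{\abs{G_\ell}}$ the decay exponent $q-1$ always exceeds $1$, so the $\delta$-integral converges and both regions contribute $\Order{N_\ell^{-1}}$. For $\E{G_\ell^2}$ the relevant exponent is $q-2$, which exceeds $1$ precisely when $q>3$: when $q>3$ the small-$\delta$ integral converges and produces the generic antithetic rate $\Order{N_\ell^{-3/2}}$, whereas for $2<q<3$ it is dominated by its lower endpoint and the heavy Monte Carlo tail gives $\Order{N_\ell^{-q/2}}$; the two cases combine to $\Order{N_\ell^{-\min(3,q)/2}}$. I expect the delicate bookkeeping to be verifying that the $\{\delta\ge\delta_0\}$ contribution is always dominated and handling the borderline $q=3$, where a spurious logarithmic factor appears and must be absorbed.
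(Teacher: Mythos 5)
Your proposal follows essentially the same route as the paper's proof: the same antithetic decomposition of $G_\ell$ into the fine estimator and the two half-block coarse estimators, the same key observation that $G_\ell$ vanishes unless the two half-block estimates have opposite signs (which forces a Monte Carlo deviation of size at least $d$), the same H\"older splitting into a moment factor times a tail-probability factor via \cref{lemma:mc-prob-bound} and~\eqref{eq:tail-probability-bound}, and the same final integration over $Y$ split at $\delta_0$ using the extra hypotheses $\sigma\le\sigma_0$ and $\E{\sigma^2}<\infty$. If anything, your explicit case analysis for the second moment when $2<q\le 3$ — where the small-$\delta$ integral no longer contributes $\Order{N_\ell^{-1/2}}$ and the rate degrades to $N_\ell^{-q/2}$ — is more careful than the paper's write-up, which applies~\eqref{eq:int-min-identity} uniformly even though that identity requires the exponent $q-p$ to exceed $1$.
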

\noindent \emph{Comment.} The proof follows the same ideas as
  in~\cite[Theorem 5.2]{giles:antithetic} where a similar result was
  derived for an antithetic estimator with respect to
  time-discretisation. The same proof idea was also employed
  in~\cite[Theorem 2.3]{bujok:bernoulli} where the result was shown
  for the same antithetic estimator considered here, a more generic
  piece-wise linear function, and
  $X$ being a Bernouli random variable.
\begin{proof}
  Since $N_\ell = 2N_{\ell-1}$, the antithetic estimator is
  \[\aligned
    G_\ell &= \max\p*{0, \frac{1}{2}\sum_{i=1}^2\inner_{N_{\ell-1}, i}(Y)} - \frac{1}{2}
    \sum_{i=1}^{2} \max\p*{0,
      \inner_{N_{\ell-1}, i}\p{Y}} %
    \endaligned
  \]
  For a given $Y$, define the Bernoulli random variable
  \revise{\[
    B = \ind{ \abs*{\inner_{N_{\ell-1}, 1}(Y) - \E{X
          \given Y}} \geq \abs{\E{X \given Y}}\;\vee\;
    \abs*{\inner_{N_{\ell-1}, 2}(Y) - \E{X
          \given Y}} \geq \abs{\E{X \given Y}}},
  \]
  i.e, $B=1$ whenever $\abs*{\inner_{N_{\ell-1}, i}(Y) - \E{X \given
      Y}} \geq \abs{\E{X \given Y}}$ for $i=1$ or
  $2$ and zero otherwise.
  Then for $p=1$ or $2$,
  \[
    \E{\abs{G_\ell}^{p}} = \E{\abs{G_\ell}^{p} {B}} + \E{\abs{G_\ell}^{p} \p{1-B}}.
  \]
  Considering the second term, when $B =
  0$, we have that both $\inner_{N_{\ell-1},
    1}$ and $\inner_{N_{\ell-1},
    2}$ share the same sign. Therefore, in this case, $G_\ell =
  0$ and the expected value is zero. On the other hand, for the first
  term, using H\"older's inequality gives
    \begin{equation}\label{eq:G-ell-sq-holder}
    \E{G_\ell^p B} = \E[\Big]{\E*{G_\ell^p B \given Y}}
    \leq \E*{\E*{\abs{G_\ell}^{q}\given Y}^{p/q} \  \E{B \given
        Y}^{1-p/q}},
  \end{equation}
  where using~\eqref{eq:tail-probability-bound} yields
  \[\aligned
    \E{B \given Y} &\leq \sum_{i=1}^{2} \prob{\abs{\inner_{N_{\ell-1},
        i}(Y) - \E{X
          \given Y}} \geq \abs{\E{X \given Y}}} \\
    &\leq 2 \min\p*{1, C_q \kappa_q \delta^{-q} N_{\ell-1}^{-q/2}}. \\
    \endaligned
  \]
  On the other hand, since $2\max\p*{0, x} = x+\abs{x}$, we have that
  \[\aligned
    f(x_1,x_2) &\eqdef \max\p*{0, \frac{x_1+x_2}{2}} - \frac{1}{2}
    \max\p*{0, x_1} - \frac{1}{2}
    \max\p*{0, x_2}\\
    &= \frac{1}{4}\p*{\abs*{x_1+x_2} - \abs*{x_1} - \abs*{x_2}}
    \endaligned
  \]
  and $f(x_1, x_2) \leq 0$ by the triangular inequality while
  \[
    \begin{aligned}
      4 f(x_1, x_2) &= \abs{2x_1-\p{x_1-x_2}} - \abs{x_1} - \abs{x_1 +
        (x_2-x_1)} \\
      &\geq 2 \abs{x_1} - \abs{x_1 - x_2} - \abs{x_1} - \abs{x_1}
      - \abs{x_2 - x_1}\\
      &= - 2 \abs{x_1 - x_2}.
    \end{aligned}
  \]
  Therefore
  \[
    \abs{f(x_1, x_2)} \leq \frac{1}{2} \abs{x_1-x_2} \leq
    \frac{1}{2}\p[\Big]{\abs*{x_1-x} + \abs*{x_2-x}}
  \]
  for any $x$. Using this and Jensen's inequality yields
  \[\aligned
    \E*{\abs*{G_\ell}^{q} \given Y}^{p/q} &\leq
    \E*{2^{-q} \p*{\sum_{i=1}^2\abs*{\inner_{N_{\ell-1, i}}(Y)-\E{X \given
            Y}} }^{q} \given Y}^{p/q} \\
    &\leq 2^{-p/q}\sum_{i=1}^{2} \E*{{\abs*{\inner_{N_{\ell-1},i}(Y)-\E{X \given Y}}
      }^{q} \given Y}^{p/q}. \endaligned
  \]
  Finally, using \cref{lemma:mc-prob-bound}, yields
  \[\aligned
    \E*{\abs*{G_\ell}^{q} \given Y}^{p/q} &\leq 2^{1-p/q} C_{q}^{p/q}
    \kappa_{q}^{p/q} \sigma^{p} N_{\ell-1}^{-p/2} = 4^{1-1/q}
    C_{q}^{p/q} \kappa_{q}^{p/q} \sigma^{p} N_\ell^{-p/2}. \endaligned
  \]
  Substituting back into~\eqref{eq:G-ell-sq-holder},
  there exist constants $c_1, c_2$ and $q'=q(1-p/q)$, independent of
  $\ell$, such that
  \[\aligned
    \E{\E{G_\ell^p B \given Y}} &\leq c_1 \E*{\sigma^{p}
      N_\ell^{-p/2} \min\p*{1, c_2 \delta^{-q'} N_{\ell}^{-q'/2}}} \\
    &\leq c_1 N_\ell^{-p/2} \E*{\sigma^{p} \min\p*{1, c_2 \delta^{-q'}
        N_{\ell}^{-q'/2}}\: \ind{\delta < \delta_0}} \\
    & \hskip 1cm + c_1 N_\ell^{-p/2} \E*{\sigma^{p} \min\p*{1, c_2
        \delta^{-q'}
        N_{\ell}^{-q'/2}}  \: \ind{\delta > \delta_0}} \\
    &\leq c_1 N_\ell^{-p/2} \sigma_0^{p} \E*{ \min\p*{1, c_2 \delta^{-q'}
        N_{\ell}^{-q'/2}} \: \ind{\delta < \delta_0}}  \\
    & \hskip 1cm + c_1
    c_2 \delta_0^{-q'} N_{\ell}^{-1-q'/2} \E*{\sigma^{p} \: \ind{\delta > \delta_0}} \\
    &\leq c_1N_\ell^{-p/2} \rho_0\sigma_0^p \int_0^\infty \min\p*{1, c_2
      \delta^{-q'} N_{\ell}^{-q'/2}} \D \delta  \\
    & \hskip 1cm + c_1 c_2 \delta_0^{-q'}
    N_{\ell}^{-1-q'/2} \E{\sigma^p} \\
    &\leq \frac{q'}{q'-1} c_1 c_2^{1/q'}\rho_0 \sigma^p_0 N_\ell^{-(p+1)/2} +
    c_1 c_2 \delta_0^{-q'} N_{\ell}^{-1-q'/2} \E{\sigma^p}.
    \endaligned
  \]
  Substituting $q'$ and assuming that
  $\E{\sigma^p}$ is bounded, yields
  \[
    \E{G_\ell^p B} \leq \Order{N^{-\min\p*{p+1, 2 - p + q}/2}}.
  \]
  which gives the two results for $p=1$ or $2$ and $q > 2$.}
\end{proof}

Hence, for $q > 2$, the complexity of the resulting MLMC method is
then $\Order{\tol^{-2}}$ since ${\alpha=1}$ and ${\beta > \gamma = 1}$
using the notation of~\cite{giles:acta}. This complexity was
  also shown in~\cite{bujok:bernoulli}. This is also the optimal
complexity of MLMC and therefore using adaptive sampling does not
improve the complexity.  Nevertheless, we can use exactly the same
adaptive \cref{alg:adaptive} to select a random number of
samples, $N_\ell$, depending on $Y$. We omit the analysis of the
resulting MLMC method and refer instead to the numerical results in
\cref{fig:model-max-results}. This figure shows that the
variance and absolute errors converge like $\Order{2^{-3\ell}}$ and
$\Order{2^{-2\ell}}$, respectively, when using an adaptive number of
samples. Even though this is faster than $\Order{2^{-3\ell/2}}$ and
$\Order{2^{-\ell}}$, respectively, that are observed when using a
deterministic number of samples (and proved in
\cref{thm:max-conv-rates}), the overall complexity of the MLMC
is the same when using the two sampling methods, as shown in
\cref{fig:model-max-results}-(d).

\pgfplotstabletranspose[input colnames
to={x}, colnames from={x}]{\loadedtable}{imgs/model-max/data.dat}

\begin{figure}
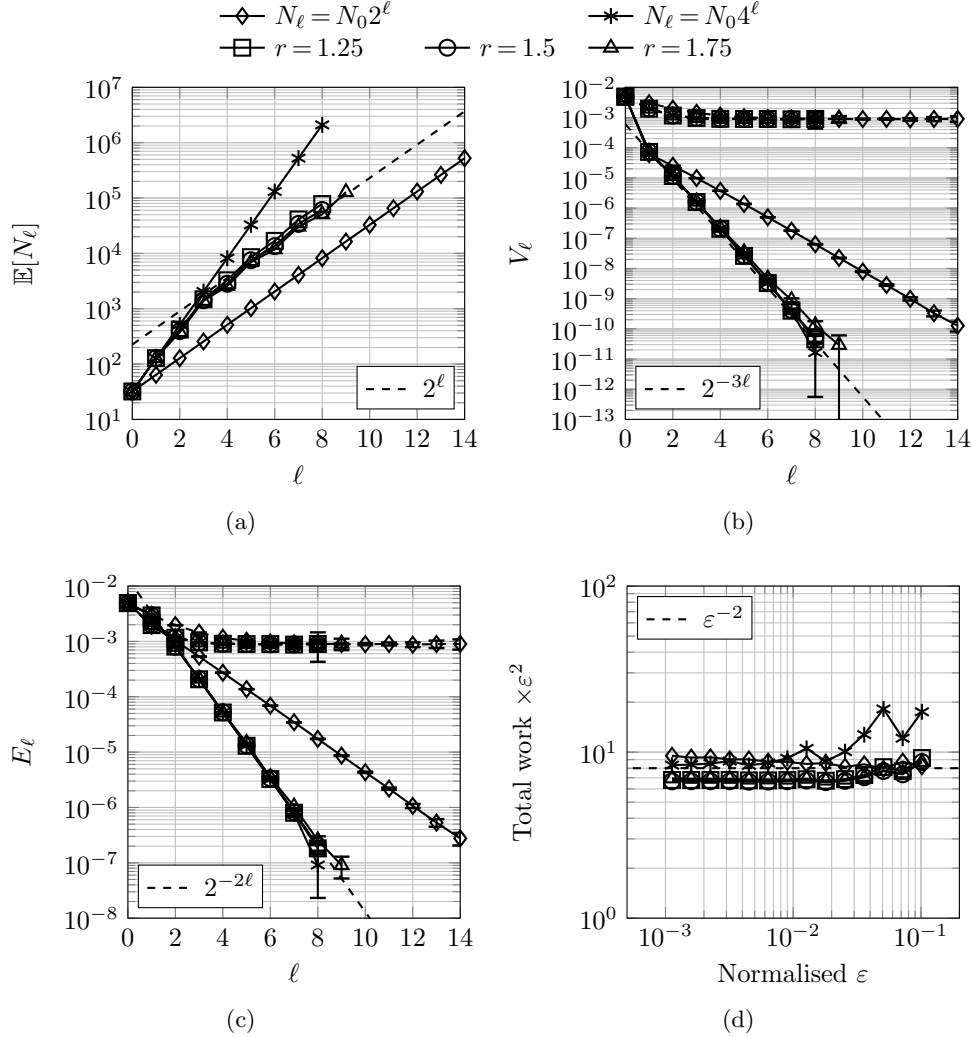

  \centering
  \begin{tabular}{lcl}
    \ref{max_model_gpu/total_work-line0} {$N_\ell = N_0 2^\ell$}
    & {}
    &\ref{max_model_gpu/total_work-line4} {$N_\ell = N_0
      4^\ell$}\\
    \ref{max_model_gpu/total_work-line1} {$r=1.25$} & \ref{max_model_gpu/total_work-line2} {$r=1.5$} &
                                                                                                           \ref{max_model_gpu/total_work-line3}
                                                                                                           {$r=1.75$}
  \end{tabular}

    \subfig{model-max/work}\hskip 0.3cm \subfig{model-max/var}
    \subfig{model-max/err}\hskip 0.3cm \subfig{model-max/total_work}
    \caption{ (a) average number of samples, (b) variance and (c)
      absolute error per level for the MLMC estimator of
      $\E{\max(\E{X\given Y, 0})}$ using antithetic deterministic and
      adaptive sampling with different values of $r$.  Also, (d) shows
      the total work of the MLMC estimator. We also plot the predicted
      rates, as discussed in \cref{s:beyond-prob}. }\label{fig:model-max-results}
\end{figure}

\section{Conclusions}\label{sec:conc}
In this work, we presented a MLMC method for nested expectations with
a step function, in which deeper levels use more samples for a Monte
Carlo estimator of the inner conditional expectation. We also
presented an adaptive algorithm that selects the number of inner
samples given a sample of the outer random variable. We showed that
under certain assumptions, the variance of the MLMC levels decreases
at the same rate that the work increases and hence the MLMC method
achieves a near-optimal $\Order{\tol^{-2} \logtol2}$ complexity for a
RMS error tolerance $\tol$. We also showed how our methods can be
combined with a root-finding algorithm to compute more complicated
risk measures, namely, the Value-at-Risk (VaR) and the Conditional
Value-at-Risk (CVaR) with the latter being obtained with
$\Order{\tol^{-2}}$ complexity.

The next step in our our work is to apply MLMC with adaptive sampling
to the problem of estimating the probability of large loss from a
financial portfolio consisting of many financial options based on
underlying assets described by general SDEs. Using unbiased MLMC, this
probability can be written as a nested expectation even in the case
when paths of the underlying stochastic differential equation must be
estimated using a time-stepping scheme. Moreover, various control
variates and sampling strategies make computing this probability more
efficient. The result is that the complexity of computing CVaR is also
$\Order{\tol^{-2}}$, independent of the number of options in the
portfolio.

\bibliographystyle{siamplain}
 \end{document}